\documentclass[12pt]{article}

\usepackage[T1]{fontenc}
\usepackage{lmodern}
\usepackage{microtype}
\usepackage[a4paper,margin=27mm]{geometry}
\usepackage{amsmath,amssymb,amsthm,mathtools,bm,mathrsfs}
\usepackage{booktabs,tabularx,array,longtable}
\usepackage{enumitem}
\usepackage[numbers,sort&compress]{natbib}
\usepackage[hidelinks]{hyperref}
\newcolumntype{Y}{>{\raggedright\arraybackslash}X}

\newcommand{\eps}{\varepsilon}
\newcommand{\ii}{\mathrm i}
\newcommand{\dd}{\,\mathrm d}
\newcommand{\Hilb}{\mathcal H}
\newcommand{\Kop}{\mathcal K}
\newcommand{\cA}{\mathcal A}
\newcommand{\Span}{\operatorname{span}}
\newcommand{\ad}{\operatorname{ad}}
\newcommand{\abs}[1]{\lvert #1\rvert}

\theoremstyle{plain}
\newtheorem{theorem}{Theorem}[section]

\newtheorem{lemma}[theorem]{Lemma}
\newtheorem{corollary}[theorem]{Corollary}
\newtheorem{criterion}[theorem]{Zakharov--Schulman necessary criterion}
\theoremstyle{remark}
\newtheorem{remark}[theorem]{Remark}

\numberwithin{equation}{section}
\setlist{leftmargin=2em,itemsep=0.25em,topsep=0.35em}
\allowdisplaybreaks
\newif\ifanonymous
\anonymousfalse
\ifanonymous
\hypersetup{
  pdftitle={Six-wave scattering in Hamiltonian Dysthe equations: Euler matching and integrability obstructions},
  pdfsubject={Six-wave scattering, Euler-Dysthe matching, and integrability obstructions},
  pdfkeywords={Hamiltonian Dysthe equations, deep-water Euler equations, six-wave resonance, canonical normal form, integrability obstruction, Zakharov--Schulman criterion},
  pdfauthor={}
}
\else
\hypersetup{
  pdftitle={Six-wave scattering in Hamiltonian Dysthe equations: Euler matching and integrability obstructions},
  pdfsubject={Six-wave scattering, Euler-Dysthe matching, and integrability obstructions},
  pdfkeywords={Hamiltonian Dysthe equations, deep-water Euler equations, six-wave resonance, canonical normal form, integrability obstruction, Zakharov--Schulman criterion},
  pdfauthor={Alex J. Sutherland and Solomon C. Yim}
}
\fi

\title{Six-wave scattering in Hamiltonian Dysthe equations:
Euler matching and integrability obstructions}
\ifanonymous
\author{}
\else
\author{Alex J. Sutherland\thanks{Corresponding author:
\texttt{suthalex@oregonstate.edu}}\quad and\quad Solomon C. Yim\\[0.35em]
\small Department of Civil and Construction Engineering,
Oregon State University\\
\small Corvallis, Oregon 97331, United States\\
\small\href{https://orcid.org/0000-0002-1364-5663}
{Alex J. Sutherland ORCID: 0000-0002-1364-5663}}
\fi
\date{}

\begin{document}
\maketitle

\begin{abstract}
We compute the effective six-wave interaction in spatial and temporal
Hamiltonian Dysthe equations and compare it with the complete degree-six
coefficient of the deep-water Euler Hamiltonian.  On every compact subset
\(K\) of a generic quadratic-dispersion resonance patch, exact resonances
persist under a small cubic correction to the dispersion, and
\[
 W_6\bigl(\Phi_{\lambda,\eps}(z)\bigr)
 =\eps\frac{4BD}{aL(z)}+O_K(\eps^2)
\]
uniformly in the resonance point \(z\) and the deformation parameter
\(\lambda\).  Here \(L\) is the spectral diameter and \(B,D\) are the local
cubic and mean-flow coefficients.  Thus \(BD\ne0\) gives nonzero scattering
on relatively open resonance sets.  At the Craig--Guyenne--Sulem temporal
coefficients, the corresponding exact Euler resonances satisfy
\[
 \frac{\mu}{4}W_6^{\rm E}\bigl(1+\mu\Psi_\mu(z)\bigr)
 =-\frac{8}{L(z)}+O_K(\mu),
\]
where \(\mu\) is the relative bandwidth.  The leading term comes from nine
contractions of quartic vertices; every other connected degree-six
contribution is uniformly bounded.  We classify the coefficient values for
which the generic six-wave coefficient vanishes identically; in the temporal
case, this also gives cancellation on every non-permutation four-wave
resonance.
Outside these sets, the Zakharov--Schulman condition restricts every regular
\(C^3\) quadratic leading symbol to
\(\operatorname{span}\{1,k,\omega(k)\}\), the span of the symbols of wave
action, momentum, and linear energy.  Consequently the physical Dysthe
models cannot support an inverse-scattering hierarchy with infinitely many
linearly independent regular \(C^3\) quadratic leading symbols.
The cancellation sets contain known nonlinear Schr\"odinger, mixed
Chen--Lee--Liu, Calogero--Moser derivative nonlinear Schr\"odinger, and
Hirota representatives.
\end{abstract}

\noindent\textbf{Keywords:} Hamiltonian Dysthe equations; deep-water Euler
equations; six-wave resonance; integrability obstruction; Zakharov--Schulman
criterion; water waves.

\medskip
\noindent\textbf{Mathematics Subject Classification:} 37K10, 35Q55,
37K55, 76B15.

\section{Introduction and main results}

The derivation of deep-water envelope equations is itself rooted in
Hamiltonian normal-form reduction.  Canonical transformations eliminate
nonresonant interactions, while the surviving resonant terms determine the
effective modulation dynamics \cite{Zakharov1968,Krasitskii1994,
CraigWorfolk1995}.  In the narrowband, weakly nonlinear, unidirectional
regime, the leading modulation equation is the cubic nonlinear Schr\"odinger
(NLS) equation.  Its exceptional scattering cancellations accompany an
inverse-scattering structure that organizes solitons, commuting conservation
laws, and nonlinear spectral evolution
\cite{ZakharovShabat1972,ZakharovManakov1974}.

The Dysthe approximation retains higher-order corrections omitted by NLS,
most notably derivative nonlinearities and the modulation-induced
mean flow and, in temporal formulations, a cubic correction to linear
dispersion \cite{Dysthe1979}.  Spatial and temporal Dysthe equations are
different asymptotic evolution descriptions of the same underlying
narrowband deep-water wave problem: a spatial equation evolves a measured
time series downstream, whereas a temporal equation evolves a spatial profile
in time \cite{LoMei1985,TrulsenDysthe1996,KitShemer2002}.  Their canonical
variables and complete coefficient tuples need not be identical.  Hamiltonian
formulations nevertheless put their resonant interactions in a common
normal-form language \cite{CraigGuyenneSulem2010,GramstadTrulsen2011,
CraigGuyenneSulem2012,GuyenneEtAl2021Spatial}.

We ask whether these corrections preserve the resonant cancellations of
integrable NLS.  For the Hamiltonian Dysthe models considered here, they do
not.  We first show that the six-wave coefficient remains nonzero on open
resonance patches as cubic dispersion is introduced.  We then show that, in
the narrowband limit, the complete degree-six Euler coefficient has the same
leading term.  Finally, we classify the coefficient values for which the
generic six-wave coefficient vanishes and determine the associated four-wave
cancellation condition.  The small-dispersion limit is
singular globally because an additional four-wave resonance plane moves to
unbounded wavenumber.

For quadratic dispersion, the first non-permutation resonant interactions in
the gauge-invariant family considered here occur at connected six-wave order.
Cubic dispersion also permits a non-permutation four-wave resonance component,
which is treated separately in Section~\ref{subsec:four-wave-plane}.  After
nonresonant quartic terms are removed by a canonical Lie transform
\cite{Deprit1969,Krasitskii1994}, the interaction between three incoming and
three outgoing Fourier modes is represented by a coefficient \(W_6\).  On an
exact \(3\to3\) resonance the linear frequency mismatch vanishes, so this
coefficient cannot be removed by the same regular homological equation.
Zakharov and Schulman showed that the quadratic leading symbol of an
additional conserved quantity must be additive across such a collision unless
the connected scattering coefficient vanishes
\cite{ZakharovSchulman1980,ZakharovSchulman1988}.  This gives a necessary
condition for an inverse-scattering hierarchy whose conserved quantities
satisfy the regularity assumptions in Section~\ref{sec:ZS}.

\subsection{Uniform persistence on resonance patches}

The central theorem establishes uniform noncancellation on relatively open
resonance patches.  In the Dysthe scaling
\[
 \omega_{\lambda,\eps}(k)=ak^2+\lambda\eps hk^3,
 \qquad c=\eps C,\qquad d=\eps D,\qquad e=\eps^2E,
 \qquad 0\leq\lambda\leq1,
\]
every compactly contained generic open patch of the quadratic-dispersion
\(3\to3\) resonance manifold has an analytic continuation consisting of exact
cubic-dispersive resonances.  The continuation preserves the
absolute-value chamber, keeps all nine virtual denominators uniformly away
from zero, and satisfies
\begin{equation}
 W_6\bigl(\Phi_{\lambda,\eps}(z)\bigr)
 =\eps\frac{4BD}{aL(z)}+O_K(\eps^2)
 \label{eq:intro-open-patch}
\end{equation}
uniformly in both the resonance point \(z\) and the deformation parameter
\(\lambda\).  If the external modes at \(z\) are
\(p_1,p_2,p_3,q_1,q_2,q_3\), their spectral diameter is
\begin{equation}
 L(z)=\max\{p_1,p_2,p_3,q_1,q_2,q_3\}
      -\min\{p_1,p_2,p_3,q_1,q_2,q_3\}>0.
 \label{eq:intro-spectral-diameter}
\end{equation}
Thus, when \(BD\ne0\), noncancellation persists for every sufficiently small
nonzero \(\eps\), uniformly in \(\lambda\in[0,1]\), on an entire relatively
open resonance patch, not merely along one explicit test sextet.  The precise
statement is Theorem~\ref{thm:strong-open}; its first-variation coefficient is
intrinsic, although the proof uses an explicit polynomial identification of
the two resonance manifolds.

Formula \eqref{eq:intro-open-patch} isolates the leading mechanism: the coupling
between the local cubic nonlinearity and the nonlocal mean-flow term,
represented by the corresponding contributions \(B\) and \(D\) to the quartic
interaction kernel.  Cubic dispersion changes the resonance geometry but does
not change this first variation.

For the Euler comparison we use the unit-carrier normalization of Craig,
Guyenne, and Sulem
\cite{CraigGuyenneSulem2021}.  The temporal coefficients are
\begin{equation}
 (a,h,B,C,D,E)=\left(-\frac18,\frac1{16},1,3,1,0\right).
 \label{eq:intro-CGS-source}
\end{equation}
Let \(\mu>0\) denote relative bandwidth, distinct from the general Dysthe
bookkeeping parameter \(\eps\).  Theorem~\ref{thm:euler-dysthe-matching}
constructs analytic exact-resonance embeddings \(\Psi_\mu\) for Euler and
\(\Phi_\mu\) for this temporal Dysthe model and proves
\begin{align}
 W_6^{\rm E}\bigl(1+\mu\Psi_\mu(z)\bigr)
 &=\mu^{-2}\widehat W_6^{\rm D}\bigl(\Phi_\mu(z)\bigr)+O_K(1),
 \label{eq:intro-strong-euler-match}\\
 \frac{\mu}{4}W_6^{\rm E}\bigl(1+\mu\Psi_\mu(z)\bigr)
 &=-\frac8{L(z)}+O_K(\mu).
 \label{eq:intro-normalized-euler-match}
\end{align}
The first equality compares complete connected coefficients, while the
second evaluates the symbolic Dysthe law \(BD/(aL)\) at
\eqref{eq:intro-CGS-source}.  The matching is uniform on every compact subset
of a generic fixed chamber.  It shows that the Dysthe obstruction is the
leading narrowband contribution of the parent Euler Hamiltonian, rather than
an interaction produced only by the envelope truncation.

\subsection{Classification and physical consequences}

For quadratic dispersion we compute the complete generic kernel
\[
 W_{6,\mathrm{sp}}
 =\frac{4d(B+c\chi)}{aL}-\frac{3d^2}{a}-12e
\]
and prove that it vanishes identically only on two coefficient components.
For cubic-corrected dispersion we identify the additional four-wave resonance
plane and classify the coefficient values for which the six-wave coefficient
vanishes identically.  These values also give four-wave cancellation on the
additional resonance plane.  The resulting cancellation sets contain, after
normalization, representatives of
three independently known integrable families: the mixed Chen--Lee--Liu
derivative nonlinear Schr\"odinger equation, the Calogero--Moser derivative
nonlinear Schr\"odinger equation, and the Hirota equation, which is a linear
combination of the NLS and complex modified Korteweg--de Vries flows
\cite{ChenLeeLiu1979,Hirota1973,
GerardLenzmann2024,Badreddine2024}.  Their integrability comes from the cited
literature; the present calculation establishes the stated resonant
cancellations.

The spatial coefficient tuples of Fedele and Dutykh
\cite{FedeleDutykh2011,FedeleDutykh2012Extended} and the temporal Hamiltonian
Dysthe equation of Craig, Guyenne, and Sulem, with its linear dispersion
truncated at cubic order \cite{CraigGuyenneSulem2021}, lie outside their
corresponding cancellation loci.
Corollaries~\ref{cor:spatial-physical-obstruction} and
\ref{cor:temporal-physical-obstruction} show that every regular \(C^3\)
quadratic leading symbol is a linear combination of
\(1,k,\omega(k)\).  Thus these models cannot support an
inverse-scattering hierarchy with infinitely many linearly independent
regular \(C^3\) quadratic leading symbols.  Section~\ref{sec:ZS}
states the precise regularity assumptions.  The Euler theorem is a separate
comparison of finite normal-form coefficients.

Throughout, subscripts on \(H_j\) count homogeneous Fourier-amplitude degree,
whereas \(\eps\) denotes steepness and is held fixed during the normal-form
calculation.

\subsection{Relation to prior six-wave calculations and organization}

Fedele and Dutykh reported inelastic solitary-wave collisions as numerical
evidence suggestive of nonintegrability
\cite{FedeleDutykh2011,FedeleDutykh2012Extended}, but did not calculate the
connected Dysthe six-wave coefficient.  Craig and Worfolk found an integrable
fourth-order normal form for infinite-depth water waves and a nonzero higher
resonant obstruction \cite{CraigWorfolk1995}.  Related scattering calculations
include the \(3\to3\), nine-channel amplitude for a compact unidirectional
Zakharov equation \cite{DyachenkoKachulinZakharov2013}.  That calculation
retained the pair-of-quartic contractions and explicitly left open whether the
direct six-wave term of the exact water-wave Hamiltonian could change the total
coefficient.  Theorem~\ref{thm:euler-dysthe-matching} addresses that issue in
the narrowband regime: the complete connected coefficient is included and
every sector omitted by the quartic truncation is proved uniformly bounded, so
none can cancel the singular leading term.  Other related calculations include
the recent closed all-multiplicity one-dimensional gravity-wave amplitudes
organized by chamber geometry \cite{ArkaniHamedEtAl2026}.  Subsequent work
traced the hydrotope geometry directly to the water-wave action
\cite{CaoEtAl2026}.
Their complete tree amplitudes apply to the two-minus sector.  In the
all-incoming convention of \cite{ArkaniHamedEtAl2026}, \emph{minus} denotes
negative spatial momentum, not negative frequency: the present positive-carrier
\(3\to3\) coefficient has momenta
\((p_1,p_2,p_3,-q_1,-q_2,-q_3)\) and is a three-minus six-point amplitude,
a sector left open there.  Cao et al.\ derive denominator-free contact
coefficients for general momentum-sign assignments, but their complete tree
resummation is likewise for the two-minus sector.  The overlap with our
calculation is therefore structural---exact Euler tree dressing, chamber
geometry, and cancellation of apparent poles associated with internal
denominators---rather than an identification of amplitudes.  Related work on
resonant water-wave Hamiltonians
and integrability tests includes \cite{Janssen2009,Ussembayev2019,
LinHuang2019,AblowitzLuoMusslimani2023}.

The exact water-wave Hamiltonian, Dirichlet--Neumann expansion, cubic normal
form, and temporal Dysthe quartic coefficient are prior ingredients
\cite{Zakharov1968,CraigSulem1993,CraigSulem2016,
CraigGuyenneSulem2021,Krasitskii1994}.  The Euler-side novelty here is the
uniform compact-patch matching of the \emph{complete connected} six-wave
coefficient, including the proof that the direct sextic vertex, all connected
contributions involving odd-degree Hamiltonian vertices, and every nonmatching
quartic sector remain uniformly
bounded in the narrowband limit.

Sections~\ref{sec:family}--\ref{sec:ZS} fix the normalization and obstruction
criterion; Sections~\ref{sec:spatial}--\ref{sec:bifurcation} prove the two
classifications, uniform persistence on resonance patches, Euler matching, and
the singular limit of the cancellation loci.  Exact certificates and
reproducibility information follow the
analytic arguments.

\subsection{Notation guide}

Table~\ref{tab:notation} collects the symbols that change role between the
spatial, temporal, and Euler parts of the paper.
\begin{table}[htbp]
\centering
\small
\begin{tabularx}{\textwidth}{@{}p{0.24\textwidth}Y@{}}
\toprule
Symbol & Meaning\\
\midrule
\(k\) & Fourier variable dual to the profile coordinate: a wavenumber in
temporal evolution and a frequency detuning in spatial evolution\\
\(p=(p_1,p_2,p_3)\), \(q=(q_1,q_2,q_3)\) & Incoming and outgoing
\(3\to3\) spectral triples\\
\(T\), \(W_6\) & Ordered quartic interaction kernel and effective connected
six-wave coefficient\\
\(a,\gamma\) & Quadratic and cubic coefficients of
\(\omega_\gamma(k)=ak^2+\gamma k^3\)\\
\(B,c,d,e\) & Local cubic, derivative-local, mean-flow, and local quintic
Hamiltonian coefficients\\
\(\eps,\lambda\) & Dysthe scaling parameter and dispersion-deformation
parameter, with \(\gamma=\lambda\eps h\)\\
\(\mu\) & Relative bandwidth about the unit Euler carrier; distinct from
\(\eps\) until the temporal comparison\\
\(K\Subset U\) & Compact subset of a connected generic fixed-chamber
resonance patch\\
\(L\) & Spectral diameter of the six external modes; the leading laws are
\(4BD/(aL)\) and \(-8/L\)\\
\(\Phi_{\lambda,\eps}\), \(\Phi_\mu\), \(\Psi_\mu\) & Exact-resonance
identifications for the Dysthe deformation, temporal Dysthe model, and Euler
dispersion\\
\(A_\kappa,U_\kappa\) & \(a+3\gamma\kappa\) and \(B+c\kappa\) at a collision
center \(\kappa\)\\
\bottomrule
\end{tabularx}
\caption{Principal notation and its role.}
\label{tab:notation}
\end{table}

\section{Canonical Hamiltonian family and Fourier representation}
\label{sec:family}

\subsection{Fourier convention and bracket}

Let \(u(t,x)\) be the complex envelope, with Fourier representation
\begin{equation}
 u(x)=\frac{1}{\sqrt{2\pi}}
 \int_{\mathbb R}\alpha(k)e^{\ii kx}\dd k.
 \label{eq:Fourier}
\end{equation}
We use the temporal canonical convention
\begin{equation}
 \ii u_t=\frac{\delta H}{\delta\bar u},
 \label{eq:canonical}
\end{equation}
or equivalently \(u_t=\{u,H\}\), with
\begin{equation}
 \{F,G\}
 =-\ii\int_{\mathbb R}
 \left(
 \frac{\delta F}{\delta\alpha(k)}
 \frac{\delta G}{\delta\bar\alpha(k)}
 -
 \frac{\delta F}{\delta\bar\alpha(k)}
 \frac{\delta G}{\delta\alpha(k)}
 \right)\dd k.
 \label{eq:bracket}
\end{equation}
Reversing the evolution variable puts the spatial convention
\(v_\xi=\ii\delta H/\delta\bar v\) into the same canonical form:
\begin{equation}
 t=-\xi,\qquad x=\tau,\qquad u(t,x)=v(-t,x).
 \label{eq:reversal}
\end{equation}
Throughout, \(k\) denotes the Fourier variable dual to the profile coordinate.
It is a wavenumber in the temporal formulation and a frequency detuning in the
original spatial formulation.  Accordingly, \(P\) below generates translations
in the chosen profile coordinate.

\subsection{Physical-space Hamiltonian and interaction kernels}

Let
\begin{equation}
 \Kop=\Hilb\partial_x,
 \qquad
 \widehat{\Kop f}(k)=\abs{k}\widehat f(k).
 \label{eq:K}
\end{equation}
Here \(\Hilb\) is the Hilbert transform with Fourier symbol
\(-\ii\operatorname{sgn}k\).  Thus
\(\Kop=|D_x|\), where \(D_x=-\ii\partial_x\).
Consider
\begin{equation}
\begin{aligned}
H[u]=\int_{\mathbb R}\bigg[
&a\abs{u_x}^2
-\frac{\ii\gamma}{2}(\bar u_xu_{xx}-u_x\bar u_{xx})
+\frac B2\abs u^4\\
&+\frac{\ii c}{4}\abs u^2(\bar u_xu-u_x\bar u)
-\frac d2\abs u^2\Kop(\abs u^2)
-\frac e3\abs u^6
\bigg]\dd x,
\end{aligned}
\label{eq:Hamiltonian}
\end{equation}
where \(a,B,c,d,e,\gamma\in\mathbb R\) and \(a\ne0\).  Every displayed density
is real: the derivative brackets are purely imaginary before multiplication
by \(\ii\), while the real even multiplier \(\abs{k}\) makes \(\Kop\)
self-adjoint.  Fourier and variational calculations are performed formally
in the amplitudes; integration by parts may be evaluated on smooth, rapidly
decaying test functions.

The corresponding PDE is
\begin{equation}
\begin{aligned}
\ii u_t
{}&+a u_{xx}-\ii\gamma u_{xxx}-B\abs u^2u
+\ii c\abs u^2u_x\\
&+d\,u\Kop(\abs u^2)+e\abs u^4u=0.
\end{aligned}
\label{eq:PDE}
\end{equation}

\subsection{Main resonance-patch theorem}

For the dispersion
\[
 \omega_{\lambda,\eps}(k)=ak^2+\lambda\eps hk^3
\]
let \(\mathcal R_{\lambda,\eps}\) denote the labeled exact \(3\to3\)
resonance manifold.  Let \(\mathcal R_0^{\rm gen}\) be its
quadratic-dispersion member after removing permutation resonances, external
collisions, rank-degenerate points, absolute-value chamber walls, repeated
virtual slots, and zero virtual denominators.  For \(z=(p,q)\), write
\[
 L(z)=\max(p_1,p_2,p_3,q_1,q_2,q_3)
      -\min(p_1,p_2,p_3,q_1,q_2,q_3).
\]

\begin{theorem}[Uniform persistence under resonance deformation]
\label{thm:strong-open}
Fix \(a\ne0\) and real \(h,B,C,D,E\), set
\[
 c=\eps C,\qquad d=\eps D,\qquad e=\eps^2E,
\]
and let \(0\leq\lambda\leq1\).  Let \(U\) be a connected fixed-chamber
relatively open subset of \(\mathcal R_0^{\rm gen}\), and let
\(K\Subset U\).

\begin{enumerate}[label=\textup{(\roman*)}]
\item There is \(\eps_K>0\) and a real-analytic embedding
\[
 \Phi_{\lambda,\eps}:K\longrightarrow\mathcal R_{\lambda,\eps},
 \qquad |\eps|<\eps_K,
\]
which is the restriction of an analytic local diffeomorphism on a relative
neighborhood of \(K\).  It satisfies
\(\Phi_{\lambda,0}=\operatorname{id}\); all image resonances remain in the
same chamber; and all nine internal denominators are uniformly bounded away
from zero.  Moreover,
\begin{equation}
 W_6\bigl(\Phi_{\lambda,\eps}(z)\bigr)
 =\eps\frac{4BD}{aL(z)}+O_K(\eps^2)
 \label{eq:open-patch-expansion}
\end{equation}
uniformly for \((z,\lambda)\in K\times[0,1]\).

\item If \(V\Subset U\) is relatively open and \(K=\overline V\), then
\(\Phi_{\lambda,\eps}(V)\) is relatively open in
\(\mathcal R_{\lambda,\eps}\).  If \(BD\ne0\), there is \(c_K>0\) such that,
for every sufficiently small nonzero \(\eps\),
\[
 \left|W_6\bigl(\Phi_{\lambda,\eps}(z)\bigr)\right|
 \geq c_K|\eps|,\qquad z\in V,\quad 0\leq\lambda\leq1.
\]
\end{enumerate}
\end{theorem}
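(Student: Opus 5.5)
The proof has two parts: an implicit-function-theorem construction of \(\Phi_{\lambda,\eps}\), and a first-order expansion of \(W_6\) along it, using the quadratic-dispersion kernel formula as the base point.

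\textbf{Step 1: the embedding.} Write the labeled resonance conditions as
\[
F_{\lambda,\eps}(p,q)=\Bigl(\textstyle\sum p_i-\sum q_i,\ \sum\omega_{\lambda,\eps}(p_i)-\sum\omega_{\lambda,\eps}(q_i)\Bigr)=0 .
\]
This system is polynomial in \((p,q,\lambda\eps)\), so \(F_{\lambda,\eps}=F_0+\lambda\eps hG\) with \(G\) the cubic mismatch. At points of \(\mathcal R_0^{\rm gen}\) the rank-nondegeneracy hypothesis says \(DF_0\) has rank two.

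I will fix a real-analytic normal field \(N\) to \(\mathcal R_0^{\rm gen}\) on a neighborhood of \(K\), for instance \(N=DF_0^{T}(DF_0DF_0^{T})^{-1}\). I then solve
\[
F_{\lambda,\eps}(z+N(z)s)=0
\]
for \(s\in\mathbb R^2\) by the analytic implicit function theorem, with \(s=s(z,\lambda\eps)=O(\eps)\) and \(s(z,0)=0\). This defines \(\Phi_{\lambda,\eps}(z)=z+N(z)s\).

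Compactness of \(K\) gives a uniform \(\eps_K\). Because \(\Phi\) depends on \((\lambda,\eps)\) only through \(\lambda\eps\), uniformity in \(\lambda\in[0,1]\) is automatic. The map is a local diffeomorphism of resonance manifolds near \(K\), since it is a small analytic perturbation of the identity in a tubular chart; injectivity on \(K\) for small \(\eps\) follows from the standard compactness argument.

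The chamber walls, external collisions and the nine virtual denominators are all conditions that are open and strictly satisfied on \(K\), with a positive margin. They therefore persist for \(|\eps|<\eps_K\). Here I note that the virtual denominators themselves change by \(O(\eps)\) under the deformed dispersion, which the margin absorbs.

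Relative openness in (ii) follows from invariance of domain on manifolds of the same dimension, or directly from the local diffeomorphism property together with the fact that \(F_{\lambda,\eps}\) cuts out \(\mathcal R_{\lambda,\eps}\) locally as a graph over the tubular chart.

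\textbf{Step 2: expansion of \(W_6\).} The connected coefficient is the direct sextic term plus the nine quartic–quartic contractions
\[
\sum_{\text{channels}} \frac{T\,T}{\Delta\omega}.
\]
On the image, every denominator is bounded away from zero and every kernel is analytic in \((z,\lambda\eps,\eps)\) on a neighborhood of \(K\). Hence \(W_6\circ\Phi_{\lambda,\eps}\) is jointly analytic in \((z,\eps,\lambda\eps)\), and Taylor's theorem with uniform remainder gives
\[
W_6=W^{(0)}+\eps W^{(1)}+O_K(\eps^2).
\]

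The terms are identified as follows.
\begin{itemize}
\item At \(\eps=0\) we have \(c=d=e=\gamma=0\). The quartic kernel is the constant \(B\) and the sextic term vanishes. The earlier quadratic-dispersion kernel \(W_{6,\mathrm{sp}}=4d(B+c\chi)/(aL)-3d^2/a-12e\) then gives \(W^{(0)}=0\); this is the NLS cancellation on resonances.
\item The coefficient \(W^{(1)}\) collects three first-order effects:
  \begin{enumerate}
  \item the \(d\)- and \(c\)-insertions in one vertex, which give \(4BD/(aL)\) from the spatial formula with \(c\chi\) of order \(\eps^2\);
  \item the \(\gamma\)-perturbation of the denominators and of the resonance point, with constant quartic kernel \(B\);
  \item the displacement \(z\mapsto\Phi(z)\) in the \(\eps=0\) function.
  \end{enumerate}
\end{itemize}

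\textbf{Main obstacle.} The hard step is showing that the \(\gamma\)-contributions (b) and (c) vanish at first order. Contribution (c) is immediate, because the \(\eps=0\) function \(B^2\sum 1/\Delta\omega_0\) vanishes identically on an open subset of the resonance manifold. Its derivative along any direction in the ambient chart is therefore the zero function's derivative, except in the normal direction, where \(\Phi\) does move points.

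To handle this, I would treat the pure-NLS quantity
\[
S(z,\gamma)=B^2\sum\frac{1}{\Delta\omega_\gamma}
\]
as a function on the ambient space and differentiate the identity \(S(\Phi_{\lambda,\eps}(z),\lambda\eps h)\) at \(\eps=0\). The strategy is to show this is exactly zero to all orders, by proving that the NLS-with-cubic-dispersion (Hirota-type linear part, \(B\) only) six-wave sum vanishes on every exact resonance of \(\omega_\gamma\). I would check this by writing the nine denominators in factored form: for \(\omega_\gamma\) each \(\Delta\omega\) factors as \((k_i-k_j)(k_l-k_m)\bigl(a+\gamma(\dots)\bigr)\)-type products, and I expect a partial-fraction identity on the resonance surface that makes the sum cancel. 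This cancellation is consistent with the paper listing Hirota in the cancellation set.

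Granting that, only contribution (a) survives. Then \(W^{(1)}=4BD/(aL(z))\) uniformly on \(K\), which proves (i).

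Part (ii) follows because \(L\) is continuous and positive on \(K\), so \(|4BD/(aL)|\ge 2c_K>0\) on \(K\), and for small \(\eps\) the \(O_K(\eps^2)\) remainder is absorbed into this margin.
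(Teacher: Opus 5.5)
\textbf{Gap: the lemma you propose for the main obstacle is false.} Your normal-field implicit-function construction of $\Phi_{\lambda,\eps}$ is sound. The paper instead deforms the outgoing polynomial explicitly, $Q_{\eta,z}=Q_z+C_0\tau_\eta X$, which yields an explicit inverse. Your reduction of the first-order coefficient is also sound: it equals the spatial term $4BD/(aL)$ plus the first $\eta$-variation, with $\eta=\lambda\eps h$, of the pure-$B$ sum $\sum_{\ell,\alpha}1/\Delta_{\ell\alpha}$ along the moving resonance.

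The trouble is how you propose to kill that variation. The pure-$B$ sum does not vanish on exact $\omega_\eta$-resonances. Use exactly the factorization you anticipate, $\Delta_{\ell\alpha}=(q_\alpha-p_i)(q_\alpha-p_j)[2a+3\eta(S-p_\ell)]$, together with $P(q_\alpha)=C_0(1-\tau_\eta q_\alpha)$ on the deformed resonance. The partial-fraction sum then gives
\[
\sum_{\ell,\alpha}\frac{1}{\Delta_{\ell\alpha}(\eta)}=\frac{1}{\eta P(1/\tau_\eta)}=\frac{27\eta^2}{8a^3}+O_K(\eta^3),\qquad \tau_\eta=\frac{3\eta}{2a+3\eta S},
\]
which is nonzero for every $\eta\ne0$. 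Two independent checks confirm this. The same quantity is the $27\gamma^2U^2/(8A^3)$ entry of Table~\ref{tab:diagonal-Laurent}. At the first witness of Appendix~\ref{app:temporal-certificate} with $c=d=e=0$, one gets $\gamma W_6=-60B^2/270\neq0$.

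Your appeal to Hirota points the wrong way. By Theorem~\ref{thm:temporal}, that component requires $ac=3\gamma B$. This is a derivative nonlinearity whose $Bc$ cross term cancels precisely this $B^2\gamma^2$ contribution. The point $B\ne0$, $c=0$, $\gamma\ne0$ lies off the cancellation set. So ``granting that'' grants a false statement, and the first-order cancellation that carries part (i) is left unproved.

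\textbf{How to close it.} Only first-order vanishing is needed, and it is true. The exact identity above already gives $O_K(\eta^2)$. You can also check it directly at fixed $z\in\mathcal R_0$, using $1/((q_\alpha-p_i)(q_\alpha-p_j))=(q_\alpha-p_\ell)/P(q_\alpha)$ and $P(q_\alpha)=C_0$:
\begin{itemize}
\item The explicit $\eta$-derivative of the sum is $-3(3\sum_ip_i^2-S^2)/(4a^2C_0)$.
\item The normal displacement of the resonance is fixed by $2a\dot\Lambda=-3C_0$, with $\Lambda=e_2(p)-e_2(q)$. It contributes $+3(3\sum_\alpha q_\alpha^2-S^2)/(4a^2C_0)$.
\item Tangential components contribute nothing, because the $\eta=0$ sum vanishes on $\mathcal R_0$.
\end{itemize}
The two nonzero terms cancel because $\sum_ip_i^2=\sum_\alpha q_\alpha^2$ on $\mathcal R_0$.

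Your Hirota instinct can also be repaired. With $d=0$, write $\cA=B^2\Sigma_0(\eta)+Bc\,\Sigma_1(\eta)+c^2\Sigma_2(\eta)$ along the deformation. The spatial theorem gives $\Sigma_j(0)=0$. Hirota cancellation at $c=3\eta B/a$ then gives $\Sigma_0=-(3\eta/a)\Sigma_1-(9\eta^2/a^2)\Sigma_2=O_K(\eta^2)$. That is vanishing at first order only, not identically.
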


\begin{remark}
Theorem~\ref{thm:strong-open} is a finite normal-form coefficient theorem;
it uses no Zakharov--Schulman hierarchy hypothesis.  The condition \(BD\ne0\)
is needed only for nonvanishing.  If \(h=0\), the analytic statement remains
true but the endpoint does not introduce cubic dispersion.
\end{remark}

\subsection{First variations and derivation of the PDE}

Treating \(u\) and \(\bar u\) as independent variables, integration by parts
and self-adjointness of \(\Kop\) give
\begin{equation}
\frac{\delta H}{\delta\bar u}
=-au_{xx}+\ii\gamma u_{xxx}+B|u|^2u
-\ii c|u|^2u_x-du\Kop(|u|^2)-e|u|^4u,
\label{eq:full-variation}
\end{equation}
which is equivalent to \eqref{eq:PDE}.  A term-by-term calculation is included
in Online Resource~1.

In Fourier coordinates,
\begin{equation}
 H_2=\int_{\mathbb R}\omega_\gamma(k)\abs{\alpha(k)}^2\dd k,
 \qquad
 \omega_\gamma(k)=ak^2+\gamma k^3.
 \label{eq:H2}
\end{equation}
We call \(\omega_\gamma(k)=ak^2+\gamma k^3\) the cubic-corrected dispersion.
Define
\begin{equation}
 d\Gamma_4
 =\frac{1}{2\pi}
 \delta(k_1+k_2-k_3-k_4)\,dk_1dk_2dk_3dk_4.
 \label{eq:Gamma4}
\end{equation}
Then
\begin{equation}
 H_4=\frac12\int
 T(k_1,k_2;k_3,k_4)
 \bar\alpha_1\bar\alpha_2\alpha_3\alpha_4\dd\Gamma_4,
 \label{eq:H4}
\end{equation}
where
\begin{equation}
\boxed{
 T(k_1,k_2;k_3,k_4)
 =B+\frac c2(k_1+k_2)
 -\frac d4
 \sum_{\substack{r=1,2\\s=3,4}}\abs{k_r-k_s}.
}
\label{eq:vertex}
\end{equation}
\subsection{Derivation of the quartic Fourier vertex}

Four Fourier factors contribute \((2\pi)^{-2}\), while spatial integration
contributes \(2\pi\delta(k_1+k_2-k_3-k_4)\). Thus every quartic term carries
the measure \(d\Gamma_4\). The local term gives the constant vertex \(B\).

For the derivative term,
\(\partial_x\bar u(k)=-\ii k\bar u(k)\) and
\(\partial_xu(k)=\ii ku(k)\). Symmetrizing over the two barred and two
unbarred legs gives
\[
\frac c4(k_1+k_2+k_3+k_4)
=\frac c2(k_1+k_2)
\]
on the momentum delta function. Finally, \(\rho=\bar uu\) has Fourier
transfer \(k_{\rm unbar}-k_{\rm bar}\). Expansion of
\(-d\int\rho\Kop\rho/2\), using the multiplier \(|k|\), followed by the same
within-pair symmetrization gives
\[
-\frac d4\bigl(
|k_1-k_3|+|k_1-k_4|+|k_2-k_3|+|k_2-k_4|
\bigr).
\]
Adding these contributions proves \eqref{eq:vertex}, including the
\(1/(2\pi)\) measure and separate barred/unbarred symmetry.
Translation invariance supplies the momentum delta function, while the
quadratic Hamiltonian supplies the frequency mismatch
\begin{equation}
 \Delta_4
 =\omega_\gamma(k_1)+\omega_\gamma(k_2)
 -\omega_\gamma(k_3)-\omega_\gamma(k_4).
 \label{eq:mismatch}
\end{equation}
We call \(T\) the quartic interaction kernel, or quartic vertex.  After the
normal-form transformation, \(W_6\) is the effective six-wave coefficient;
on exact resonances it is the formal six-wave scattering amplitude in this
normalization.

For sextic terms set
\begin{equation}
 d\Gamma_6
 =\frac{1}{(2\pi)^2}
 \delta(p_1+p_2+p_3-q_1-q_2-q_3)
 \prod_{i=1}^3dp_i\prod_{j=1}^3dq_j,
 \label{eq:Gamma6}
\end{equation}
We integrate over ordered sextuples and take \(W_6\) to be symmetric separately
in the incoming and outgoing variables, with the prefactor \(1/(3!3!)\):
\begin{equation}
 H_6=\frac{1}{3!3!}\int
 W_6(p;q)
 \prod_{j=1}^3\bar\alpha(q_j)
 \prod_{i=1}^3\alpha(p_i)\dd\Gamma_6.
 \label{eq:H6}
\end{equation}
Consequently the bare term \(-e\int\abs u^6/3\) contributes
\begin{equation}
 W_6^{\mathrm{bare}}=-12e.
 \label{eq:bare}
\end{equation}

\subsection{Symmetries and quadratic parts of the basic conserved quantities}

All terms contain equal numbers of \(u\) and \(\bar u\), so \(H\) is
invariant under the global \(U(1)\) action \(u\mapsto e^{\ii\theta}u\).
No density depends explicitly on \(x\), so \(H\) is translation invariant.
The associated quadratic generators are
\begin{equation}
 N=\int\abs{\alpha(k)}^2\dd k,
 \qquad
 P=\int k\abs{\alpha(k)}^2\dd k.
 \label{eq:NP}
\end{equation}
The quadratic parts of \(N\), \(P\), and \(H\) have symbols
\(1\), \(k\), and \(\omega_\gamma(k)\).

\subsection{Action-only monomials}

For a discrete Fourier approximation define
\[
 J_k=\bar\alpha_k\alpha_k=\abs{\alpha_k}^2.
\]
A gauge-invariant \(m\to m\) monomial is
\[
 M_{p,q}=\bar\alpha_{q_1}\cdots\bar\alpha_{q_m}
 \alpha_{p_1}\cdots\alpha_{p_m}.
\]
If the incoming and outgoing multisets agree, including multiplicities, a
permutation \(\sigma\) satisfies \(q_j=p_{\sigma(j)}\).  Commutativity gives
\begin{equation}
 M_{p,q}=\prod_{j=1}^m\bar\alpha_{p_j}\alpha_{p_j}
 =\prod_{j=1}^mJ_{p_j}.
 \label{eq:action-only}
\end{equation}
It is therefore a function only of the actions.  The multiset qualification
handles repeated modes correctly.

\section{Normal form through degree six}
\label{sec:normal-form}

Write
\begin{equation}
 H=H_2+H_4+H_6^{\mathrm{bare}}+O(8),
 \qquad H_4=Z_4+P_4,
 \label{eq:split}
\end{equation}
where \(Z_4\) is resonant and \(P_4\) is nonresonant.  For a quartic monomial
\(M_Q\), direct use of the bracket gives
\[
 \{H_2,M_Q\}=-\ii\Delta_QM_Q.
\]
Consequently, if \(C_QM_Q\) is a term of \(P_4\), then
\[
 (F_4)_Q=-\ii\,\frac{C_Q}{\Delta_Q}M_Q
\]
satisfies \(\{H_2,(F_4)_Q\}=-C_QM_Q\).  Summing these terms over all
nonresonant quartets gives \(F_4\) with
\begin{equation}
 \{H_2,F_4\}=-P_4.
 \label{eq:homological}
\end{equation}
No division is made on the resonant set.  Apply
\begin{equation}
 \widetilde H=e^{\ad_{F_4}}H,
 \qquad
 \ad_{F_4}G=\{G,F_4\}.
 \label{eq:Lie}
\end{equation}
Since \(\deg\{G_m,F_n\}=m+n-2\), the expansion through degree six is
\begin{equation}
\boxed{
 \widetilde H
 =H_2+Z_4+H_6^{\mathrm{bare}}
 +\{Z_4,F_4\}
 +\frac12\{P_4,F_4\}+O(8).
}
\label{eq:normalform}
\end{equation}

For a disjoint sextic target---six distinct external modes and no mode common
to its incoming and outgoing triples---the action-only part of \(Z_4\) cannot
create the target.  A quartic action monomial \(J_kJ_l\) retains an intact
action factor after one contraction, so at least one wavenumber would
occur on both sides.  In the
temporal problem there is also a non-action quartic resonant plane.  For a
generic target whose nine virtual quartets all have nonzero mismatch, none of
those quartets lies on that plane.  Thus in both calculations the target
coefficient comes from
\[
 H_6^{\mathrm{bare}}+\frac12\{P_4,F_4\}.
\]

Choose the incoming mode \(p_\ell\) omitted from the first quartic vertex and
a selected outgoing mode \(q_\alpha\), and let \(\{p_i,p_j\}\) be the
remaining pair.  The internal
mode is
\begin{equation}
 r_{\ell\alpha}=p_i+p_j-q_\alpha=S-p_\ell-q_\alpha.
 \label{eq:internal}
\end{equation}
There are \(3\times3=9\) channels.  Put
\begin{equation}
 \Delta_{\ell\alpha}
 =\omega_\gamma(q_\alpha)+\omega_\gamma(r_{\ell\alpha})
 -\omega_\gamma(p_i)-\omega_\gamma(p_j),
 \label{eq:channel-den}
\end{equation}
and, writing \(\{\beta,\nu\}=\{1,2,3\}\setminus\{\alpha\}\), define
the slots explicitly by
\[
T^{(1)}_{\ell\alpha}=T(q_\alpha,r_{\ell\alpha};p_i,p_j),
\qquad
T^{(2)}_{\ell\alpha}=T(q_\beta,q_\nu;r_{\ell\alpha},p_\ell).
\]
The formula applies to a disjoint target with six distinct external modes,
four pairwise distinct slots in each virtual quartet, nonzero
\(\Delta_{\ell\alpha}\). Then
\begin{equation}
 \cA(p;q)=\sum_{\ell=1}^3\sum_{\alpha=1}^3
 \frac{T^{(1)}_{\ell\alpha}T^{(2)}_{\ell\alpha}}
 {\Delta_{\ell\alpha}},
 \qquad
 \boxed{W_6=-4\cA-12e.}
 \label{eq:channel-sum}
\end{equation}
The numerical factors are convention-dependent; this is why the Fourier and
factorial normalizations precede the theorem.

For completeness, the factor \(-4\) and the count of nine channels follow
directly from these conventions.  For fixed \((\ell,\alpha)\), let \(M_1\)
and \(M_2\) be the two quartic monomials joined through
\(r_{\ell\alpha}\).  The factor \(1/2\) in \(H_4\), together with the
\(2!2!\) permutations within the barred and unbarred pairs, gives
coefficients \(2T^{(1)}_{\ell\alpha}\) and
\(2T^{(2)}_{\ell\alpha}\).  Their frequency mismatches are
\(\Delta_{\ell\alpha}\) and \(-\Delta_{\ell\alpha}\), while the two
ordered cross terms in \(\{P_4,F_4\}\) contribute
\(-8T^{(1)}_{\ell\alpha}T^{(2)}_{\ell\alpha}/\Delta_{\ell\alpha}\).
The Lie-series factor \(1/2\) leaves the coefficient \(-4\) in
\eqref{eq:channel-sum}.  Each one-contraction decomposition is indexed
uniquely by the omitted incoming mode \(p_\ell\) and selected outgoing mode
\(q_\alpha\), hence \(3\times3=9\) channels.  Contracting the two quartic
momentum deltas produces the sextic delta with no additional Jacobian in the
unitary Fourier convention, and comparison with the \(1/(3!3!)\) sextic
normalization gives the bare contribution \(-12e\).

\section{Zakharov--Schulman necessary condition}
\label{sec:ZS}

Consider a translation- and phase-invariant formal conserved functional
\begin{equation}
 I=I_2+I_4+I_6+\cdots,
 \qquad
 I_2=\int\varphi(k)\abs{\alpha(k)}^2\dd k.
 \label{eq:I}
\end{equation}
where \(\varphi\in C^3(J)\) on the connected spectral interval under
consideration.  Assume that the kernels required through degree six are
regular on the generic resonances used below, where all internal frequency
mismatches are nonzero.  Conservation is imposed coefficient by coefficient
in the Fourier amplitudes.  We call \(\varphi\) the quadratic leading symbol;
quadratic refers to amplitude degree.

\begin{criterion}
On a generic exact \(3\to3\) resonance, regular solvability of the sixth-order
homological equation requires
\begin{equation}
 \boxed{
 D_\varphi(p;q)W_6(p;q)=0,
 \qquad
 D_\varphi(p;q)=\sum_{i=1}^3\varphi(p_i)
 -\sum_{j=1}^3\varphi(q_j).
 }
 \label{eq:ZS}
\end{equation}
\end{criterion}

This is the \(3\to3\), phase-invariant specialization of
Zakharov and Schulman's scattering alternative
\cite[Theorem~3.1]{ZakharovSchulman1988}; see also
\cite{ZakharovSchulman1980}.

In the present normalization, transform \(H\) and \(I\) by the same quartic
Lie map. At degree four one solves
\[
\{H_2,\widetilde I_4\}+\{Z_4,I_2\}=0
\]
where the internal frequency mismatches are nonzero.  At degree six a
generic disjoint resonant
monomial has no \(\{H_2,\widetilde I_6\}\) contribution because its
\(3\to3\) mismatch is zero. The remaining connected coefficient is
\(D_\varphi(p;q)W_6(p;q)\). Action-only quartic factors cannot create the
disjoint target; on the cubic-dispersive slice, a non-action resonant quartic
contribution would force one of the excluded internal denominators to vanish.
Thus regular solvability gives \(D_\varphi W_6=0\).

\section{Quadratic-dispersion spatial slice}
\label{sec:spatial}

Set \(\gamma=0\), so \(\omega_0(k)=ak^2\).

\subsection{Quartic resonances are action-only}

For quadratic one-dimensional dispersion, four-wave resonances are
permutation resonances; hence the quartic resonant normal form is a function
of the modal actions.  We recall the elementary argument for completeness and
use it only as standard resonance background.

Let \(k_1+k_2=k_3+k_4=S\) and impose quadratic frequency resonance.  Since
\[
 k_1^2+k_2^2=S^2-2k_1k_2,
\]
frequency equality gives \(k_1k_2=k_3k_4\).  The unordered pairs are roots
of the same polynomial \(z^2-Sz+k_1k_2\), hence
\begin{equation}
 \{k_1,k_2\}=\{k_3,k_4\}.
 \label{eq:4action}
\end{equation}
Every resonant quartic monomial is action-only.

\subsection{Exact six-wave geometry}

Let \(p,q\) satisfy
\begin{equation}
 \sum_i p_i=\sum_jq_j=S,
 \qquad
 \sum_i p_i^2=\sum_jq_j^2.
 \label{eq:sp-res}
\end{equation}
Define \(P(z)=\prod_i(z-p_i)\) and \(Q(z)=\prod_j(z-q_j)\).  The triples
have the same first two elementary symmetric sums, so
\begin{equation}
 P(z)-Q(z)=C_0
 \label{eq:constant-difference}
\end{equation}
for \(C_0\ne0\) on a genuine non-permutation resonance.  The level-set
picture of a monic cubic yields only two generic interlacings, related by
exchanging \(p\) and \(q\):
\[
 q_1<p_1<p_2<q_2<q_3<p_3
\]
or its reverse.  The channel denominator is
\begin{equation}
 \Delta_{\ell\alpha}
 =2a(q_\alpha-p_i)(q_\alpha-p_j),
 \qquad
 \frac1{\Delta_{\ell\alpha}}
 =\frac{q_\alpha-p_\ell}{2aC_0}.
 \label{eq:sp-den}
\end{equation}
Set
\begin{equation}
\begin{gathered}
 k_-=\min(p_1,p_2,p_3,q_1,q_2,q_3),\qquad
 k_+=\max(p_1,p_2,p_3,q_1,q_2,q_3),\\
 L=k_+-k_-,\qquad
 \chi=\frac{2S-k_--k_+}{4}.
\end{gathered}
\label{eq:Lchi}
\end{equation}
Thus \(\chi\) is the mean of the four non-extreme external modes.

\subsection{Spatial kernel and cancellation components}

\begin{theorem}[Spatial six-wave kernel]
\label{thm:spatial}
On every generic genuine spatial \(3\to3\) resonance with distinct external
modes and nonzero internal denominators,
\begin{equation}
 \boxed{
 W_{6,\mathrm{sp}}(p;q)
 =\frac{4d(B+c\chi)}{aL}
 -\frac{3d^2}{a}-12e.
 }
 \label{eq:Wsp}
\end{equation}
It vanishes identically on the generic spatial resonance manifold if and only
if either
\begin{equation}
 \boxed{d=e=0}
 \label{eq:local-branch}
\end{equation}
or
\begin{equation}
 \boxed{B=c=0,\qquad e=-\frac{d^2}{4a}.}
 \label{eq:CM-branch}
\end{equation}
\end{theorem}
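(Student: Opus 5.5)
The plan is to evaluate the channel sum \eqref{eq:channel-sum} in closed form on a fixed interlacing chamber, and then read off the vanishing locus. By \eqref{eq:sp-den}, every denominator is replaced by the linear numerator \((q_\alpha-p_\ell)/(2aC_0)\). Hence
\[
\cA=\frac{1}{2aC_0}\sum_{\ell,\alpha}(q_\alpha-p_\ell)\,T^{(1)}_{\ell\alpha}T^{(2)}_{\ell\alpha},
\]
with \(r_{\ell\alpha}=S-p_\ell-q_\alpha\) substituted. I will work in the chamber \(q_1<p_1<p_2<q_2<q_3<p_3\). The reverse chamber follows by exchanging \(p\) and \(q\). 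This exchange sends \(C_0\mapsto-C_0\) and reverses each channel, and the claimed formula is symmetric under it, since \(L\) and \(\chi\) are symmetric. In the chosen chamber every \(|k_r-k_s|\) in \eqref{eq:vertex} has a determined sign, with one exception: the position of the internal mode \(r_{\ell\alpha}\) relative to the external modes must be settled channel by channel. I expect this to be forced by the interlacing, using \(r_{\ell\alpha}=p_i+p_j-q_\alpha\). Where it is not forced, I will subdivide by the signs of the relevant differences and check that the final answer agrees across the subdivision. After this step each \(T\) is a polynomial of degree one in \((p,q)\).

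Next I expand \(T^{(1)}T^{(2)}\) by coefficient monomials \(B^2,\ Bc,\ c^2,\ Bd,\ cd,\ d^2\) and treat each piece separately.
\begin{itemize}
\item \(B^2\): the piece is \(B^2\sum(q_\alpha-p_\ell)=B^2(3S-3S)=0\).
\item \(Bc\) and \(c^2\): these are sums of polynomials that are symmetric under the \(S_3\times S_3\) action. I will reduce them with \(e_1(p)=e_1(q)\) and \(e_2(p)=e_2(q)\). This should make them vanish, or at most leave multiples of \(C_0=e_3(q)-e_3(p)\) that combine into the \(-3d^2/a\) constant. I expect exact cancellation for the \(c\)-only terms.
\item \(Bd\), \(cd\), \(d^2\): these involve the chamber-linearized absolute values and are no longer fully symmetric. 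I will use \(C_0=P(q_1)=(q_1-p_1)(q_1-p_2)(q_1-p_3)\) and the analogous evaluation \(C_0=-Q(p_3)\) to factor the resulting cubic polynomials.
\end{itemize}
The target identities are
\[
\sum(q_\alpha-p_\ell)\,[\text{\(Bd\) part}]=-\frac{2C_0B}{L},\qquad
\sum(q_\alpha-p_\ell)\,[\text{\(d^2\) part}]=\frac{3C_0}{2}\cdot(\text{const}),
\]
with \(L=p_3-q_1\), together with the \(\chi\)-weighted analogue for the \(cd\) part. Multiplying by \(-4/(2aC_0)\) and adding \(-12e\) then yields \eqref{eq:Wsp}.

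The main obstacle is this absolute-value bookkeeping. It requires proving that the piecewise-linear sums collapse to \(C_0/L\), which is a polynomial division by \(p_3-q_1\) on the resonance variety. I would first verify the division by exact symbolic computation in the parametrization \(Q=P-C_0\). Then I would look for a conceptual route: the extreme modes \(k_\pm\) appear only through \(L\), and the four interior modes enter through their mean \(\chi\).

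For the classification, note that translations \(k\mapsto k+t\) and dilations \(k\mapsto sk\) preserve \eqref{eq:sp-res} and the chamber. Translation leaves \(L\) fixed and shifts \(\chi\) by \(t\), while dilation rescales \(L\). Hence \(1\), \(1/L\) and \(\chi/L\) are linearly independent functions on any open patch of the generic manifold. So \(W_{6,\mathrm{sp}}\equiv0\) if and only if
\[
dB=0,\qquad dc=0,\qquad 3d^2/a+12e=0.
\]
If \(d=0\), these give \(e=0\), which is \eqref{eq:local-branch}. If \(d\ne0\), they give \(B=c=0\) and \(e=-d^2/(4a)\), which is \eqref{eq:CM-branch}. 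Conversely, both branches make \eqref{eq:Wsp} vanish identically.
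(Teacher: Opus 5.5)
Your plan is essentially the paper's proof. After \eqref{eq:sp-den}, the $B^2$, $Bc$ and $c^2$ pieces cancel exactly by the first two resonance moments. The $Bd$, $cd$ and $d^2$ pieces are $-\tfrac{Bd}{2}M_0$, $-\tfrac{cd}{4}M_1$ and $\tfrac{d^2}{4}M_2$, which the chamber identities \eqref{eq:Midentities} evaluate as $M_0=4C_0/L$, $M_1=8C_0\chi/L$ and $M_2=6C_0$. Your translation--dilation orbit is exactly the affine family \eqref{eq:affine} the paper uses for necessity.

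One simplification: you do not need to place $r_{\ell\alpha}$ channel by channel. Momentum conservation in each quartet gives $|r_{\ell\alpha}-p_i|=|q_\alpha-p_j|$ and $|q_\beta-r_{\ell\alpha}|=|p_\ell-q_\nu|$, so each vertex depends only on the external cross-difference sums \eqref{eq:spatial-cross-sums}.
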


The first component is the local mixed Chen--Lee--Liu derivative nonlinear
Schr\"odinger component; the
second contains the focusing canonical Calogero--Moser gauge representative
and its defocusing counterpart.  Corollary~\ref{cor:integrable-branches}
gives the transformations, sign qualification, and primary citations.
The algebraic cancellation theorem itself does not prove all-order
integrability.

\begin{proof}
Resolve the absolute values in the first interlacing chamber by writing
\[
\begin{aligned}
p&=(k_0+x,\ k_0+x+y,\ k_0+2x+y+2z+w),\\
q&=(k_0,\ k_0+2x+y+z,\ k_0+2x+y+z+w),
\end{aligned}
\]
where \(x,y,z,w>0\) and quadratic resonance is equivalent to
\(x(x+y)=z(z+w)\).  Here \(L=2x+y+2z+w\) and
\(C_0=-x(x+y)L\).  Define the cross-difference sums
\begin{equation}
 R^{(1)}_{\ell\alpha}=\sum_{m\ne\ell}|q_\alpha-p_m|,
 \qquad
 R^{(2)}_{\ell\alpha}=\sum_{n\ne\alpha}|p_\ell-q_n|,
 \label{eq:spatial-cross-sums}
\end{equation}
and the following finite sums, with \(\ell,\alpha\in\{1,2,3\}\):
\begin{equation}
\begin{aligned}
 M_0&=\sum_{\ell,\alpha}(q_\alpha-p_\ell)
       (R^{(1)}_{\ell\alpha}+R^{(2)}_{\ell\alpha}),\\
 M_1&=\sum_{\ell,\alpha}(q_\alpha-p_\ell)
       [(S-q_\alpha)R^{(1)}_{\ell\alpha}
        +(S-p_\ell)R^{(2)}_{\ell\alpha}],\\
 M_2&=\sum_{\ell,\alpha}(q_\alpha-p_\ell)
       R^{(1)}_{\ell\alpha}R^{(2)}_{\ell\alpha}.
\end{aligned}
\label{eq:spatial-M-definitions}
\end{equation}
After reducing the nine denominators, the local--local part cancels by
the first two resonance moments.  Substitution of the displayed gaps and
the relation \(x(x+y)=z(z+w)\) gives
\begin{equation}
 M_0=\frac{4C_0}{L},\qquad
 M_1=\frac{8C_0\chi}{L},\qquad
 M_2=6C_0
 \label{eq:Midentities}
\end{equation}
modulo the resonance polynomial.  Substitution gives
\begin{equation}
 \cA_{\mathrm{sp}}
 =-\frac{d(B+c\chi)}{aL}+\frac{3d^2}{4a},
 \label{eq:Asp}
\end{equation}
and \eqref{eq:channel-sum} yields \eqref{eq:Wsp}.  Exchanging the triples
proves the second chamber.

For necessity use
\begin{equation}
 p=\kappa+\rho(-5,-2,7),
 \qquad
 q=\kappa+\rho(-7,2,5),
 \label{eq:affine}
\end{equation}
For \(\rho>0\), \(L=14\rho\) and \(\chi=\kappa\); for \(\rho<0\)
the orientation reverses and \(L=14|\rho|\). If \(d=0\), cancellation
forces \(e=0\). If \(d\ne0\), the independent
\(\kappa/|\rho|\), \(1/|\rho|\), and constant coefficients force
\(c=0\), \(B=0\), and \(e=-d^2/(4a)\). Direct substitution proves
sufficiency.
\end{proof}

The table of the nine channel contributions in each chamber and the unreduced
polynomial certificates behind \eqref{eq:Midentities} are supplied in Online
Resource~1.  They provide a direct exact-rational check of the
short analytic reduction above without interrupting the classification
argument.
\subsection{Quadratic leading symbols}
\label{subsec:spatial-symbols}

Assume \(W_6\ne0\) along \eqref{eq:affine}.  The necessary condition
\eqref{eq:ZS}
forces
\begin{equation}
\begin{aligned}
D_\varphi(\kappa,\rho)
={}&\varphi(\kappa-5\rho)+\varphi(\kappa-2\rho)
+\varphi(\kappa+7\rho)\\
&-\varphi(\kappa-7\rho)-\varphi(\kappa+2\rho)
-\varphi(\kappa+5\rho)=0.
\end{aligned}
\label{eq:Dphi}
\end{equation}
The zeroth, first, and second moments of the base triples agree.  Their
cubic-moment difference is \(420\).  Taylor's theorem for
\(\varphi\in C^3(\mathbb R)\) gives
\begin{equation}
 D_\varphi(\kappa,\rho)
 =70\varphi'''(\kappa)\rho^3+o(\rho^3).
 \label{eq:Taylor}
\end{equation}
Off the global cancellation components, the spatial formula on this family is
\[
W_6(\kappa,\rho)
=\frac{2d(B+c\kappa)}{7a|\rho|}-\frac{3d^2}{a}-12e.
\]
If \(d=0\), being off the local cancellation component means \(e\ne0\),
so this is nonzero for every center. If \(d\ne0\), it is nonzero for all
sufficiently small \(|\rho|\) at every center except possibly one:
simultaneous \(B+c\kappa=0\) and \(e=-d^2/(4a)\) can select at most one
\(\kappa\), unless the whole Calogero--Moser component holds. Therefore
\eqref{eq:Taylor} forces \(\varphi'''(\kappa)=0\) at every nonexceptional
center. Since \(\varphi\in C^3(\mathbb R)\), \(\varphi'''\) is
continuous and also vanishes at the possible exceptional center. Thus it
vanishes everywhere, so
\begin{equation}
 \boxed{\varphi(k)=\beta_0+\beta_1k+\beta_2k^2.}
 \label{eq:quadratic-symbol}
\end{equation}
This is exactly \(\varphi\in\Span\{1,k,k^2\}\).  Since
\(H_2=a\int k^2\abs{\alpha(k)}^2\dd k\),
\begin{equation}
 I_2=\beta_0N+\beta_1P+\frac{\beta_2}{a}H_2.
 \label{eq:I2span}
\end{equation}
The \(C^3\) hypothesis is used because this symmetric family cancels moments
through order two; the first informative Taylor coefficient is the third
derivative.

\subsection{Spatial water-wave points}

The relevant source is Fedele and Dutykh
\cite{FedeleDutykh2011,FedeleDutykh2012Extended}.  Their original generic
equation is transformed by the gauge in their equation~(3.4).  In the
relabelled canonical spatial notation of their equations~(3.7) and~(3.9),
\begin{equation}
\begin{aligned}
v_\xi+\ii av_{\tau\tau}+\ii b\abs v^2v
{}&+c_{\mathrm{sp}}\eps\abs v^2v_\tau
+\ii d_{\mathrm{sp}}\eps v\Hilb[(\abs v^2)_\tau]\\
&+\ii e_{\mathrm{sp}}\eps^2\abs v^4v=0,
\end{aligned}
\label{eq:sp-PDE}
\end{equation}
If \((a_{\rm FD},h_{\rm FD},c_{\rm FD},r_{\rm FD},f_{\rm FD})\) denote the
original coefficients (with \(r_{\rm FD}\) replacing their letter \(e\) to
avoid collision with our sextic coefficient), the gauge map is
\[
b=h_{\rm FD},\qquad c_{\rm sp}=c_{\rm FD}-r_{\rm FD},\qquad
d_{\rm sp}=f_{\rm FD},\qquad
e_{\rm sp}=\frac{c_{\rm FD}r_{\rm FD}-2r_{\rm FD}^2}{4a_{\rm FD}}.
\]
After the evolution reversal \eqref{eq:reversal},
the coefficient map is
\begin{equation}
 B=-b,\qquad c=-\eps c_{\mathrm{sp}},\qquad
 d=\eps d_{\mathrm{sp}},\qquad e=\eps^2e_{\mathrm{sp}}.
 \label{eq:map}
\end{equation}
Fedele--Dutykh's potential-envelope point is
\begin{equation}
 (a,b,c_{\mathrm{sp}},d_{\mathrm{sp}},e_{\mathrm{sp}})
 =(1,1,8,2,0).
 \label{eq:potential-point}
\end{equation}
On the explicit test resonance \((\kappa,\rho)=(0,1)\),
\begin{equation}
 W_{6,\mathrm{pot}}
 =-\frac{4\eps(1+21\eps)}7\ne0
 \qquad(\eps>0).
 \label{eq:potential-W}
\end{equation}
Their gauge-Hamiltonian free-surface point is \((1,1,6,2,2)\), giving
\begin{equation}
 W_{6,\mathrm{surf}}
 =-\frac{4\eps(1+63\eps)}7\ne0.
 \label{eq:surface-W}
\end{equation}
The published spatial Hamiltonian and coefficient choices are from
\cite{FedeleDutykh2011,FedeleDutykh2012Extended}; the underlying asymptotic correction
originates in \cite{Dysthe1979}.

\begin{corollary}[Spatial Hamiltonian Dysthe six-wave obstruction]
\label{cor:spatial-physical-obstruction}
Consider either the Fedele--Dutykh potential-envelope point
\[
 (a,b,c_{\rm sp},d_{\rm sp},e_{\rm sp})=(1,1,8,2,0)
\]
or their gauge-Hamiltonian free-surface point
\[
 (a,b,c_{\rm sp},d_{\rm sp},e_{\rm sp})=(1,1,6,2,2),
\]
with physical steepness \(\eps>0\), in the convention
\eqref{eq:sp-PDE} \cite{FedeleDutykh2011,FedeleDutykh2012Extended}.
For every formal conserved functional satisfying the assumptions of
Section~\ref{sec:ZS}, with quadratic leading symbol
\(\varphi\in C^3(\mathbb R)\),
\begin{equation}
 \varphi(k)=\beta_0+\beta_1k+\beta_2k^2.
 \label{eq:spatial-physical-symbol}
\end{equation}
Consequently neither spatial model can support an inverse-scattering
hierarchy with infinitely many linearly independent regular \(C^3\)
quadratic leading symbols.
\end{corollary}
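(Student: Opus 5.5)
The plan is to reduce the corollary to Theorem~\ref{thm:spatial} and the symbol argument of Section~\ref{subsec:spatial-symbols}. I first translate each Fedele--Dutykh point into the canonical coefficients through the reversal map \eqref{eq:map}:
\[
 B=-b=-1,\qquad c=-\eps c_{\rm sp},\qquad d=2\eps,\qquad e=\eps^2e_{\rm sp}.
\]
Both points then lie on the quadratic-dispersion slice \(\gamma=0\) with \(a=1\). Since \(d=2\eps\ne0\) for \(\eps>0\), the local component \eqref{eq:local-branch} is excluded. The Calogero--Moser component \eqref{eq:CM-branch} requires \(B=0\), but \(B=-1\), so it is excluded as well. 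Hence neither point lies in the cancellation set of Theorem~\ref{thm:spatial}.

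Next I evaluate \eqref{eq:Wsp} on the affine family \eqref{eq:affine}:
\[
 W_6(\kappa,\rho)=\frac{2d(B+c\kappa)}{7a|\rho|}-\frac{3d^2}{a}-12e .
\]
Its \(1/|\rho|\) coefficient is \(\tfrac{4\eps}{7}(-1-\eps c_{\rm sp}\kappa)\), which is nonzero except at the single center \(\kappa_*=-1/(\eps c_{\rm sp})\). So for every center \(\kappa\ne\kappa_*\) there is \(\rho_0(\kappa)>0\) with \(W_6(\kappa,\rho)\ne0\) for \(0<|\rho|<\rho_0\). The explicit values \eqref{eq:potential-W} and \eqref{eq:surface-W} at \((0,1)\) serve as a consistency check. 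For such \(\rho\) the family is a generic resonance with distinct external modes and nonzero denominators, so the criterion \eqref{eq:ZS} applies and forces \(D_\varphi(\kappa,\rho)=0\).

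The Taylor expansion \eqref{eq:Taylor} then gives \(70\varphi'''(\kappa)\rho^3+o(\rho^3)=0\) as \(\rho\to0\), so \(\varphi'''(\kappa)=0\) at every nonexceptional center. By continuity of \(\varphi'''\) (the \(C^3\) hypothesis), it also vanishes at \(\kappa_*\). Integrating three times gives \eqref{eq:spatial-physical-symbol}. Since \(a=1\) and \(\gamma=0\), \(k^2\) is the symbol of \(H_2\), so \(\varphi\in\Span\{1,k,\omega(k)\}\). The span is three-dimensional, so a hierarchy with infinitely many linearly independent regular \(C^3\) leading symbols is impossible.

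The main obstacle is the genericity bookkeeping in the second step. I must confirm that the affine sextets avoid every excluded locus of Theorem~\ref{thm:spatial} and of Section~\ref{sec:ZS} for small \(\rho\ne0\). Concretely this means:
\begin{itemize}
\item the six external modes are distinct;
\item each virtual quartet has distinct slots;
\item \(C_0\ne0\), hence all nine \(\Delta_{\ell\alpha}\) are nonzero by \eqref{eq:sp-den};
\item the sextet sits in an interlacing chamber.
\end{itemize}
I will check these directly from the integer pattern \((-5,-2,7)\), \((-7,2,5)\) scaled by \(\rho\) and shifted by \(\kappa\). All the relevant differences are nonzero multiples of \(\rho\), so each condition holds uniformly in \(\kappa\).
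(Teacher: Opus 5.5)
Your proposal is correct and follows essentially the same route as the paper. You place both Fedele--Dutykh points off the two cancellation components of Theorem~\ref{thm:spatial} via \eqref{eq:map}, use nonvanishing of \eqref{eq:Wsp} along \eqref{eq:affine} at all but one center, and close with \eqref{eq:Taylor} and continuity of \(\varphi'''\). Your explicit coefficient translation and genericity checks on the integer pattern spell out what the paper's short proof delegates to Section~\ref{subsec:spatial-symbols}.
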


\begin{proof}
The spatial formula \eqref{eq:Wsp}, evaluated along
\eqref{eq:affine}, is nonzero at every center except possibly one; the
continuity argument in Section~\ref{subsec:spatial-symbols} covers that
center.  Equations
\eqref{eq:Dphi}--\eqref{eq:I2span} then give
\eqref{eq:spatial-physical-symbol}.  The symbols \(1\), \(k\), and \(k^2\)
correspond to action, momentum, and the quadratic part of the Hamiltonian.
An inverse-scattering hierarchy with infinitely many independent regular
quadratic leading symbols would require symbols outside this three-dimensional
span.
\end{proof}

\section{Cubic-dispersive temporal slice}
\label{sec:temporal}

Let \(\gamma\ne0\).  In physical Dysthe scaling set
\begin{equation}
 \gamma=\eps h,\qquad c=\eps C,\qquad d=\eps D,
 \qquad e=\eps^2E.
 \label{eq:scaling}
\end{equation}

\subsection{Cubic dispersion creates a genuine four-wave plane}
\label{subsec:four-wave-plane}

For a momentum-conserving quartet put
\[
 S=k_1+k_2=k_3+k_4,
 \quad P_{\mathrm{in}}=k_1k_2,
 \quad P_{\mathrm{out}}=k_3k_4.
\]
Using the second and third Newton sums gives
\begin{equation}
 \Delta_4=-(2a+3\gamma S)(P_{\mathrm{in}}-P_{\mathrm{out}}).
 \label{eq:factor4}
\end{equation}
There are therefore the action-only component and the genuine plane
\begin{equation}
 \boxed{S=-\frac{2a}{3\gamma}.}
 \label{eq:plane}
\end{equation}
The minus sign follows from
\(\omega_\gamma=ak^2+\gamma k^3\).  On the plane write
\(k_{1,2}=S/2\pm r\) and \(k_{3,4}=S/2\pm s\).  Then
\begin{equation}
 T_{\mathrm{plane}}
 =B-\frac{ac}{3\gamma}
 -\frac d2\bigl(\abs{r-s}+\abs{r+s}\bigr).
 \label{eq:Tplane}
\end{equation}
It vanishes for all \(r,s\) if and only if
\begin{equation}
 d=0,\qquad 3\gamma B=ac.
 \label{eq:W4condition}
\end{equation}
This is an order-four obstruction, earlier than the temporal six-wave test.

\subsection{Exact temporal \texorpdfstring{\(3\to3\)}{3-to-3} geometry}

Let \(p,q\) have equal total momentum \(S\) and equal total
\(\omega_\gamma\)-frequency.  Define
\[
 e_2(p)=p_1p_2+p_1p_3+p_2p_3,\qquad
 e_3(p)=p_1p_2p_3,
\]
and similarly for \(q\).  Put
\begin{equation}
\begin{gathered}
 P(z)=\prod_i(z-p_i),\qquad Q(z)=\prod_j(z-q_j),\\
 \Lambda=e_2(p)-e_2(q),\qquad
 \xi=S+\frac{2a}{3\gamma}.
\end{gathered}
\label{eq:temporal-defs}
\end{equation}
The frequency constraint is equivalent to
\(e_3(p)-e_3(q)=\xi\Lambda\), hence
\begin{equation}
 \boxed{P(z)-Q(z)=\Lambda(z-\xi).}
 \label{eq:linear-difference}
\end{equation}
For channel \((\ell,\alpha)\),
\begin{equation}
 \Delta_{\ell\alpha}
 =(q_\alpha-p_i)(q_\alpha-p_j)
 \bigl[2a+3\gamma(S-p_\ell)\bigr],
 \label{eq:temp-den1}
\end{equation}
or equivalently
\begin{equation}
 \boxed{
 \Delta_{\ell\alpha}
 =3\gamma\Lambda
 \frac{(\xi-p_\ell)(q_\alpha-\xi)}{q_\alpha-p_\ell}.
 }
 \label{eq:temp-den2}
\end{equation}
This displays all internal poles explicitly.

Define
\begin{align}
 V^{(1)}_{\alpha\ell}
 &=B+\frac c2(S-p_\ell)
 -\frac d2\sum_{m\ne\ell}\abs{q_\alpha-p_m},
 \label{eq:V1}\\
 V^{(2)}_{\alpha\ell}
 &=B+\frac c2(S-q_\alpha)
 -\frac d2\sum_{n\ne\alpha}\abs{p_\ell-q_n}.
 \label{eq:V2}
\end{align}
The global absolute-value formula is
\begin{equation}
 \boxed{
 \cA_{\mathrm{temp}}(p;q)
 =\sum_{\ell=1}^3\sum_{\alpha=1}^3
 \frac{V^{(1)}_{\alpha\ell}V^{(2)}_{\alpha\ell}}
 {\Delta_{\ell\alpha}},
 \qquad
 W_{6,\mathrm{temp}}=-4\cA_{\mathrm{temp}}-12e.
 }
 \label{eq:temp-kernel}
\end{equation}
After sorting the two triples, the relative order of
\(p_1,p_2,p_3,q_1,q_2,q_3,\xi\) fixes every absolute-value sign.
Formula~\eqref{eq:temp-kernel} is valid before those signs are resolved; the
complete chamber enumeration is supplied in Online Resource~1.

\subsection{Temporal cancellation theorem}

\begin{theorem}[Temporal six-wave cancellation component]
\label{thm:temporal}
For \(a\gamma\ne0\), the temporal six-wave kernel vanishes on every generic
disjoint exact sextet with nonzero virtual denominators if and only if
\begin{equation}
 \boxed{d=0,\qquad e=0,\qquad 3\gamma B=ac.}
 \label{eq:Hirota}
\end{equation}
On this component the four-wave kernel also vanishes on every
non-permutation exact quartet.  Hence \eqref{eq:Hirota} is, in particular,
the unique component of simultaneous non-permutation four-wave and generic
six-wave cancellation.
This is the local Hamiltonian Hirota component, a linear combination of the
NLS and complex modified Korteweg--de Vries flows.  There is no
nonlocal temporal cancellation component; the explicit coefficient matching
and integrability citation are in
Corollary~\ref{cor:integrable-branches}.
\end{theorem}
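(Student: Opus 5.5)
Sufficiency comes first.  Assume \(d=e=0\) and \(3\gamma B=ac\), and write \(\xi_0=-2a/(3\gamma)\), so that \(\xi=S-\xi_0\).  Then every vertex in \eqref{eq:V1}--\eqref{eq:V2} is affine with no absolute values:
\[
V^{(1)}_{\alpha\ell}=\frac c2\bigl(S-p_\ell-\xi_0\bigr)=\frac c2(\xi-p_\ell),
\qquad
V^{(2)}_{\alpha\ell}=\frac c2(\xi-q_\alpha),
\]
using \(B=ac/(3\gamma)=-c\xi_0/2\).  Substituting \eqref{eq:temp-den2} gives
\[
\cA_{\rm temp}=-\frac{c^2}{12\gamma\Lambda}\sum_{\ell,\alpha}(q_\alpha-p_\ell)=-\frac{c^2}{12\gamma\Lambda}(3S-3S)=0,
\]
and \(e=0\).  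So \(W_{6,\rm temp}\equiv0\) on every generic sextet, and the chamber decomposition is not needed.  The four-wave claim is immediate from \eqref{eq:Tplane} and \eqref{eq:W4condition}.  By \eqref{eq:factor4}, every non-permutation exact quartet lies on the plane \eqref{eq:plane}, and \(T_{\rm plane}\) vanishes identically under \(d=0\), \(3\gamma B=ac\).  The uniqueness sentence then follows once necessity is proved, because \eqref{eq:W4condition} alone already forces \(d=0\) and \(3\gamma B=ac\).

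For necessity, I would mimic the spatial proof: evaluate \eqref{eq:temp-kernel} on an explicit affine family of exact resonances and extract independent coefficients.  The quadratic-dispersion sextet \eqref{eq:affine} is not a cubic resonance in general.  By \eqref{eq:linear-difference}, however, a sextet is an exact cubic resonance precisely when \(P-Q\) is linear with root \(\xi\).  I would therefore take the spatial pair \(p=\kappa+\rho(-5,-2,7)\), \(q=\kappa+\rho(-7,2,5)\), for which \(P-Q\) is constant, and modify it so that \(P-Q=\Lambda(z-\xi)\).  The simplest way is to prescribe the sorted roots by choosing \(Q\) and then \(P=Q+\Lambda(z-\xi)\), with \(\xi\) a free parameter.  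In this family \(S=3\kappa\) and \(\xi=3\kappa+2a/(3\gamma)\).

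The next step is to expand \(W_{6,\rm temp}\) in the small spread \(\rho\) at fixed center \(\kappa\), keeping \(\xi-\kappa\) of order one.  The denominators then behave like \(3\gamma\Lambda(\xi-\kappa)^2/(q_\alpha-p_\ell)\), which reproduces the spatial structure with \(2a\) replaced by the local curvature \(A_\kappa\).  The leading singular term should be \(4dU_\kappa/(A_\kappa L)\), where \(U_\kappa=B+c\kappa\).  The \(O(1)\) terms should include \(-3d^2/A_\kappa-12e\) together with corrections in \(c\) and \(\gamma\).  The necessity conclusion would then run as follows.  The \(1/|\rho|\) coefficient, vanishing for all \(\kappa\), forces \(d\,U_\kappa=0\) for all \(\kappa\).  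If \(d\ne0\), this gives \(B=c=0\), and the constant term then requires \(e=-d^2/(4A_\kappa)\) for every \(\kappa\).  Since \(A_\kappa\) varies with \(\kappa\) when \(\gamma\ne0\), this is impossible, which is why no nonlocal (Calogero--Moser) temporal component exists.  Hence \(d=0\).  With \(d=0\) the vertices are affine, and the exact sum computed above generalizes to
\[
\cA_{\rm temp}=\frac{1}{3\gamma\Lambda}\sum_{\ell,\alpha}\frac{(q_\alpha-p_\ell)\bigl(B+\tfrac c2(S-p_\ell)\bigr)\bigl(B+\tfrac c2(S-q_\alpha)\bigr)}{(\xi-p_\ell)(q_\alpha-\xi)}.
\]
Its partial-fraction structure in \(\xi\) has a pole at \(\xi=p_\ell\) with residue proportional to \(B+\tfrac c2(S-\xi)\), equivalently \((3\gamma B-ac)\)-type factors.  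Demanding that this vanish identically forces \(3\gamma B=ac\), and the remaining constant then forces \(e=0\).

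The main obstacle is the expansion of the nonlocal terms in the \(d\ne0\) case.  The absolute values depend on the chamber ordering including \(\xi\), and one must check that the \(d^2\) constant really carries the \(\kappa\)-dependent factor \(1/A_\kappa\) rather than cancelling.  To control this, I would work in a single fixed chamber with \(\xi\) far from the cluster and verify the key coefficients by exact rational substitution at two or three values of \(\kappa\), as the spatial proof does through its online certificates.
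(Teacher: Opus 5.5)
Your sufficiency computation, the four-wave claim, and the uniqueness deduction are the paper's own. The paper first shifts \(k=y-a/(3\gamma)\) so that \(T=c(y_1+y_2)/2\), but its final sum \(-\frac{c^2}{12\gamma\Lambda}\sum_\alpha(3q_\alpha-S)=0\) is exactly yours.

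Necessity is where you genuinely differ. The paper uses no asymptotics. It evaluates the kernel at six explicit rational pure-cubic resonances (Appendix~\ref{app:temporal-certificate}) and obtains a linear system in \((B_0^2,B_0c,B_0d,cd,d^2,\gamma e)\), with \(B_0=B-ac/(3\gamma)\). The determinant \(59584377600\ne0\) kills every monomial at once. This needs no case split, but the certificate is opaque and needs a limiting argument for the witnesses with repeated slots.

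Your collision-family argument is the one the paper runs for Theorem~\ref{thm:temporal-symbol}. Appendix~\ref{app:diagonal-expansion} confirms the expansion you predict, \(W_6=\frac{2dU_\kappa}{7A_\kappa\rho}+\frac{9\gamma U_\kappa(ac-3\gamma B)}{2A_\kappa^3}-\frac{3d^2}{A_\kappa}-12e+O(\rho)\). So your case analysis goes through, and it explains structurally why the Calogero--Moser branch has no temporal analogue: \(A_\kappa\) varies with \(\kappa\). To justify the expansion you need three facts:
\begin{itemize}
\item the \(U_\kappa^2\) part equals \(U_\kappa^2/(\gamma P(\xi))=O(1)\), by \eqref{eq:local-local-open-identity}, which holds at every exact resonance since \(1/\tau_\eta=\xi\);
\item the \(\rho^{-1}\) term is the spatial identity \(M_0=4C_0/L\) with \(a\mapsto A_\kappa\);
\item when \(B=c=0\), the constant term comes from \(M_2=6C_0\) alone.
\end{itemize}

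Two points need tightening. First, \(\xi=S+2a/(3\gamma)\) is fixed by the sextet, so ``partial fractions in \(\xi\)'' is not literally available. Instead write \(V^{(1)}_{\alpha\ell}=B_0+\frac c2(\xi-p_\ell)\) and \(V^{(2)}_{\alpha\ell}=B_0+\frac c2(\xi-q_\alpha)\), and use \(P(\xi)=Q(\xi)\) and \(P'(\xi)-Q'(\xi)=\Lambda\) to sum exactly. For \(d=0\) this gives
\[
 \cA_{\rm temp}=\frac{(3B+cS)(3\gamma B-ac)}{9\gamma^2P(\xi)}.
\]
This agrees with the witness polynomials \(f_1\), \(f_3\) at \(d=0\). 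Because \(P(\xi)\) is nonconstant on resonances of fixed \(S\), \(W_6\equiv0\) forces \((3B+cS)(3\gamma B-ac)=0\) for all admissible \(S\), and then \(e=0\). Hence \(3\gamma B=ac\).

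Second, fixing \(Q\) and perturbing \(P\) moves \(L\) at order \(\rho\). Your \(\rho^0\) coefficient will therefore differ from the paper's by a multiple of \(\gamma dU_\kappa/A_\kappa^2\); the paper's family fixes \(p\) and \(q_1\), keeping \(L=14\rho\). This is harmless, since the extra term vanishes once \(U_\kappa\equiv0\).
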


\begin{proof}
Shift \(k=y-a/(3\gamma)\).  Affine terms in the dispersion do not affect
resonance, and the dispersion becomes \(\gamma y^3\) up to affine terms.
The constant part of the vertex becomes \(B_0=B-ac/(3\gamma)\).
Writing \(E_0=\gamma e\), multiplication of the six-wave equation by
\(\gamma\ne0\) gives the normalized witness equations.
For the six explicit rational pure-cubic
resonances listed with every denominator in
Appendix~\ref{app:temporal-certificate}, clearing only verified nonzero
denominators gives a linear system for the six monomials
\begin{equation}
\bigl(B_0^2,B_0c,B_0d,cd,d^2,E_0\bigr)
\label{eq:temporal-monomial-list}
\end{equation}
whose coefficient matrix has determinant \(59584377600\ne0\).  If the
six-wave coefficient vanishes identically, it vanishes at these witnesses,
so every monomial in \eqref{eq:temporal-monomial-list} is zero.  In
particular \(B_0^2=d^2=E_0=0\), and hence \(B_0=d=e=0\).  This proves
necessity.  Appendix~\ref{app:temporal-certificate} prints the system and its
equivalent ideal certificate.

For sufficiency, if \(B_0=d=e=0\), then
\(T=c(y_1+y_2)/2\).  The channel sum reduces to
\begin{equation}
 \sum_{\ell=1}^3
 \frac1{\prod_{m\ne\ell}(z-p_m)}
 =\frac{3z-S}{P(z)}
 \label{eq:residue}
\end{equation}
and \(P(q_\alpha)=\Lambda(q_\alpha-S)\).  More explicitly,
\[
 \cA_{\rm temp}
 =\frac{c^2}{12\gamma}
   \sum_{\alpha=1}^3\frac{(S-q_\alpha)(3q_\alpha-S)}{P(q_\alpha)}
 =-\frac{c^2}{12\gamma\Lambda}
    \sum_{\alpha=1}^3(3q_\alpha-S)=0.
\]
The bare sextic term is zero, which proves six-wave sufficiency.
Equation~\eqref{eq:W4condition} then gives four-wave cancellation on the same
component.
\end{proof}

\begin{corollary}[Known integrable representatives]
\label{cor:integrable-branches}
The cancellation loci in Theorems~\ref{thm:spatial} and
\ref{thm:temporal} recover the following known integrable equations.
\begin{enumerate}
\item On the spatial local component \(d=e=0\), the case \(c=0\) is cubic
NLS.  If \(c\ne0\), the carrier/Galilean change
\begin{equation}
 u(t,x)=e^{\ii(\nu x-a\nu^2t)}
 q\bigl(t,x-2a\nu t\bigr),\qquad \nu=-\frac{B}{c},
 \label{eq:CLL-carrier}
\end{equation}
converts
\[
 \ii u_t+a u_{xx}-B|u|^2u+\ii c|u|^2u_x=0
\]
into
\[
 \ii q_t+a q_{xx}+\ii c|q|^2q_x=0.
\]
After scaling, this is the Chen--Lee--Liu derivative nonlinear Schr\"odinger
equation.  Thus the whole local component is the mixed Chen--Lee--Liu
derivative nonlinear Schr\"odinger family
\cite{ZakharovManakov1974,ChenLeeLiu1979,Kundu1984}.

\item On the spatial nonlocal component
\[
 B=c=0,\qquad e=-\frac{d^2}{4a},
\]
the equation is
\[
 \ii u_t+a u_{xx}+d\,u|D|(|u|^2)
 -\frac{d^2}{4a}|u|^4u=0.
\]
When \(d/a>0\), scaling gives
\[
 \ii q_t+q_{xx}+q|D|(|q|^2)-\frac14|q|^4q=0,
\]
the focusing canonical gauge representative of the Calogero--Moser
derivative nonlinear Schr\"odinger equation in Appendix~C, equation~(C.5), of G\'erard and
Lenzmann \cite{GerardLenzmann2024}.  For \(d/a<0\), the corresponding
nonlinear phase gauge identifies the equation with the defocusing model
treated in \cite{Badreddine2024}.  The cited integrability results apply on
the Hardy phase spaces of the ungauged equations and their images under
this gauge transformation.  The intermediate-NLS spectral transform of
\cite{PelinovskyGrimshaw1995} provides an earlier integrable nonlocal
lineage; the normalization and phase-space identifications used here
follow \cite{GerardLenzmann2024,Badreddine2024}.

\item On the temporal component \(d=e=0\), \(3\gamma B=ac\), put
\(\alpha=a\), \(\beta=-\gamma\), and
\(\sigma=-B/(2a)\).  Then the equation is exactly
\[
 \ii q_t+\alpha(q_{xx}+2\sigma|q|^2q)
 +\ii\beta(q_{xxx}+6\sigma|q|^2q_x)=0,
\]
the integrable Hirota equation, equivalently a linear combination of the NLS
and complex modified Korteweg--de Vries flows in the
Ablowitz--Kaup--Newell--Segur (AKNS) hierarchy
\cite{Hirota1973,AblowitzKaupNewellSegur1974}.
\end{enumerate}

For every item above, the six-wave coefficient vanishes identically on generic
disjoint exact sextets with nonzero virtual denominators.  On the temporal
Hirota component, the four-wave coefficient also vanishes on every
non-permutation exact quartet.  These statements follow from
Theorems~\ref{thm:spatial} and \ref{thm:temporal}.  Integrability is supplied
independently by the cited
Lax-pair or inverse-scattering literature; vanishing of the present
necessary obstruction is not used as a sufficiency argument.
For the nonlocal component, the integrability attribution carries the
Hardy phase-space restriction just stated, whereas the cancellation
formula in Theorem~\ref{thm:spatial} allows unrestricted real spectral modes.
\end{corollary}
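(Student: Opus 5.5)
The plan is to check each item by direct substitution and then read off the cancellation statements from Theorems~\ref{thm:spatial} and \ref{thm:temporal}. Integrability itself will be imported from the cited literature and not reproved. First I confirm that each listed coefficient tuple lies on the corresponding cancellation locus: \(d=e=0\) for item~1, \(B=c=0\), \(e=-d^2/(4a)\) for item~2, and \(d=e=0\), \(3\gamma B=ac\) for item~3. The closing paragraph of the corollary is then a restatement of those theorems, and of \eqref{eq:W4condition} for the four-wave statement on the Hirota component.

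For item~1, I substitute \eqref{eq:CLL-carrier} into \(\ii u_t+au_{xx}-B|u|^2u+\ii c|u|^2u_x=0\), writing \(\theta=\nu x-a\nu^2t\) and \(X=x-2a\nu t\).
\begin{itemize}
\item The time derivative gives \(\ii u_t=e^{\ii\theta}(a\nu^2q-2\ii a\nu q_X+\ii q_t)\).
\item The dispersion term gives \(au_{xx}=e^{\ii\theta}a(q_{XX}+2\ii\nu q_X-\nu^2q)\). The \(\nu^2\) and \(q_X\) terms cancel against those from \(\ii u_t\), which is the standard Galilean invariance of the linear Schr\"odinger operator.
\item The derivative nonlinearity gives \(\ii c|u|^2u_x=e^{\ii\theta}(\ii c|q|^2q_X-c\nu|q|^2q)\).
\end{itemize}
The choice \(\nu=-B/c\) makes \(-B-c\nu=0\), so the local cubic term disappears and the Chen--Lee--Liu form remains. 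The case \(c=0\) is cubic NLS directly. A scaling \(x\mapsto x\sqrt{|a|}\) together with amplitude rescaling then gives the normalized CLL equation.

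For item~2, I write \(u(t,x)=A\,q(t,x)\) with \(A>0\) and possibly rescale \(t\) to normalize \(a=1\). The identity \(|D|(|Aq|^2)=A^2|D|(|q|^2)\) shows that \(A^2=d/a\) (requiring \(d/a>0\)) gives the coefficient \(1\) on the nonlocal term. The quintic coefficient then becomes \(-\tfrac{d^2}{4a}A^4/a=-\tfrac14\), matching (C.5) of \cite{GerardLenzmann2024}. For \(d/a<0\), I only record that the same scaling yields the sign-reversed nonlocal term and cite \cite{Badreddine2024} for the defocusing gauge identification. The Hardy-space caveat is stated as a qualification, not proved.

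For item~3, I expand the Hirota equation with \(\alpha=a\), \(\beta=-\gamma\) and \(\sigma=-B/(2a)\).
\begin{itemize}
\item The term \(2\alpha\sigma|q|^2q\) equals \(-B|q|^2q\).
\item The term \(\ii\beta q_{xxx}\) equals \(-\ii\gamma q_{xxx}\).
\item The term \(6\ii\beta\sigma|q|^2q_x\) equals \(3\ii\gamma B/a\,|q|^2q_x\), and this is \(\ii c|q|^2q_x\) exactly when \(3\gamma B=ac\).
\end{itemize}
This reproduces \eqref{eq:PDE} with \(d=e=0\). The main point needing care is sign bookkeeping, namely the signs of \(\gamma\) versus the \(-\ii\gamma u_{xxx}\) convention and the Galilean phase in item~1. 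I would verify these terms first, since every other step is routine.
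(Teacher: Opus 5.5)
Your proposal follows the paper's route. The corollary is justified there by the same coefficient matching: the Galilean phase with \(\nu=-B/c\), an amplitude/time scaling, and termwise comparison with the Hirota equation. The cancellation statements are read off from Theorems~\ref{thm:spatial} and \ref{thm:temporal} together with \eqref{eq:W4condition}, and integrability is imported from the cited literature. Your computations for items~1 and~3 are correct.

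There is one slip in item~2. With \(u(t,x)=A\,q(at,x)\), the nonlocal and quintic coefficients become \(dA^2/a\) and \(eA^4/a\), so you need \(A^2=a/d\), not \(A^2=d/a\). Your choice gives \((d/a)^2\) and \(-d^4/(4a^4)\) instead of \(1\) and \(-\tfrac14\). With \(A^2=|a/d|\), both normalizations follow from the scaling invariant \(ae/d^2=-\tfrac14\). For \(d/a<0\), the same choice gives nonlocal coefficient \(-1\) with the quintic coefficient still \(-\tfrac14\).
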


\subsection{Collision symbols for the temporal dispersion}

To classify quadratic leading symbols, we use exact resonant sextets whose
six modes approach an arbitrary center \(\kappa\).  Their virtual denominators
are \(O(\rho^2)\), so the compact-patch estimates do not apply.  The following
collision expansion supplies the required limit.

\begin{theorem}[Temporal quadratic leading symbols]
\label{thm:temporal-symbol}
Let \(a\gamma\ne0\), and let \(I\) satisfy the assumptions of
Section~\ref{sec:ZS}, with quadratic leading symbol
\(\varphi\in C^3(J)\) on a connected interval.  If the coefficients do not
lie on
\[
d=e=0,\qquad ac=3\gamma B,
\]
then every regular additive symbol of a quadratic leading term is
\begin{equation}
\boxed{\varphi(k)=\beta_0+\beta_1k+\beta_2\omega_\gamma(k).}
\label{eq:temporal-symbol-classification}
\end{equation}
\end{theorem}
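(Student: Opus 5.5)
The plan mirrors the spatial argument of Section~\ref{subsec:spatial-symbols}, with the affine family \eqref{eq:affine} replaced by exact cubic-dispersive resonant sextets that collapse onto a center \(\kappa\in J\). At each center I first normalize the dispersion locally. Write \(k=\kappa+y\); then
\[
\omega_\gamma(\kappa+y)=\text{affine}(y)+A_\kappa y^2+\gamma y^3,
\qquad A_\kappa=a+3\gamma\kappa .
\]
Affine terms do not affect resonance. For \(A_\kappa\ne0\), I look for sextets of the form
\[
p=\kappa+\rho\,\hat p(\rho),\qquad q=\kappa+\rho\,\hat q(\rho),
\]
where \((\hat p(0),\hat q(0))\) is the quadratic test pair \((-5,-2,7),(-7,2,5)\). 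The resonance equations, after dividing by \(\rho^2\), read
\[
A_\kappa\bigl(\textstyle\sum\hat p^2-\sum\hat q^2\bigr)+\gamma\rho\bigl(\textstyle\sum\hat p^3-\sum\hat q^3\bigr)=0 .
\]
The base point is a regular point of the quadratic resonance manifold (rank condition). The implicit function theorem therefore gives an analytic branch \(\hat p(\rho)=\hat p(0)+O(\rho)\) with the momentum constraint preserved. Along this branch the zeroth, first and second moments of \(\hat p(0)\) and \(\hat q(0)\) still match. Hence the Taylor argument of \eqref{eq:Taylor} gives
\[
D_\varphi=70\varphi'''(\kappa)\rho^3+o(\rho^3),
\]
after checking that the \(O(\rho)\) perturbation of the nodes only contributes \(\rho^2\cdot O(\rho)\cdot\bigl(\text{second-moment mismatch}\bigr)\). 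That mismatch is itself \(O(\rho)\) by the resonance relation, so it is absorbed into \(o(\rho^3)\) provided the correction is set up carefully. Equivalently, I would expand \(D_\varphi\) to third order and use the exact resonance identity \(\sum\varphi\)-moments \(=-(\gamma/A_\kappa)\rho\times\) third-moment difference. That yields
\[
D_\varphi=70\,\rho^3\Bigl(\varphi'''(\kappa)-\tfrac{6\gamma}{2A_\kappa}\varphi''(\kappa)\Bigr)+o(\rho^3),
\]
which vanishes exactly when \(\varphi\) is locally in \(\Span\{1,k,\omega_\gamma\}\). This replaces \(\varphi'''=0\) by the ODE
\[
A_\kappa\varphi'''(\kappa)=3\gamma\varphi''(\kappa),
\]
whose solution space on any interval avoiding \(\kappa=-a/(3\gamma)\) is \(\Span\{1,k,\omega_\gamma(k)\}\), since \(\omega_\gamma''=2A_k\).

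The second ingredient is the collision expansion of \(W_6\) along these sextets. All nine denominators are
\[
\Delta_{\ell\alpha}=(q_\alpha-p_i)(q_\alpha-p_j)\bigl[2A_\kappa+O(\rho)\bigr]
\]
by \eqref{eq:temp-den1}, so they are \(O(\rho^2)\) but nonzero. The vertices are \(U_\kappa+O(\rho)\) with \(U_\kappa=B+c\kappa\), plus \(d\)-terms of size \(O(\rho)\). Rescaling by \(\rho\) reduces the leading term to the quadratic-dispersion channel sum with \(a\) replaced by \(A_\kappa\), and \(B+c\chi\) replaced by \(U_\kappa\). The \(U_\kappa^2/\rho^2\) and \(U_\kappa c/\rho\) singular pieces cancel by the same moment identities used in Theorem~\ref{thm:spatial}. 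I expect
\[
W_6=\frac{2dU_\kappa}{7A_\kappa|\rho|}+W_6^{(0)}(\kappa)+o(1),
\]
where the constant \(W_6^{(0)}\) now includes a \(\gamma\)-dependent correction coming from the \(O(\rho)\) terms in the denominators multiplying the \(U_\kappa^2\) numerators. Computing this constant is the main obstacle: the \(\rho^{-2}\) cancellation must hold to next order, and the residual term, probably proportional to \(\gamma U_\kappa(\ldots)\) or to \(\gamma^2U_\kappa^2/A_\kappa^3\), must be identified. Concretely, I would expand each channel to relative order \(\rho\) and sum with the explicit nodes, and check the result against the exact certificate families of Appendix~\ref{app:temporal-certificate}.

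With this expansion the argument concludes as follows. If \(d\ne0\), then \(W_6\ne0\) for small \(\rho\) except at the isolated centers where \(U_\kappa=0\). If \(d=0\), the expansion leaves a constant, say
\[
W_6^{(0)}(\kappa)=-12e+\gamma\text{-term}\times\bigl(B-ac/(3\gamma)\bigr)^2/A_\kappa^3 .
\]
Off the Hirota component \eqref{eq:Hirota}, this is a nonzero real-analytic function of \(\kappa\) and vanishes at most at isolated \(\kappa\). Should the leading collision coefficient vanish identically for some non-Hirota \((d=0)\) tuple, I would instead invoke Theorem~\ref{thm:temporal} together with Theorem~\ref{thm:strong-open}-type openness. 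In that case \(W_6\) is nonzero on an open set of generic sextets, and I would use a two-parameter non-collapsing family there. At all nonexceptional centers \eqref{eq:ZS} forces \(A_\kappa\varphi'''=3\gamma\varphi''\). Continuity of \(\varphi'''\) and \(\varphi''\) extends this identity to the isolated exceptional centers and to \(A_\kappa=0\). There the equation becomes \(\varphi''(\kappa)=0\) while \(\omega_\gamma''(\kappa)=0\), which is consistent. Since \(\omega_\gamma''=2A\), integrating the ODE on the connected interval \(J\), where the quotient \(\varphi''/\omega_\gamma''\) is constant away from the single zero of \(A\) and the two sides are matched by \(C^3\) regularity at that zero, gives \eqref{eq:temporal-symbol-classification}.
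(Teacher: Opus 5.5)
Your overall architecture matches the paper's proof. You use the same collision family through \((-5,-2,7)\to(-7,2,5)\), the implicit-function branch, and the expansion \(D_\varphi=70\rho^3[\varphi'''(\kappa)-(3\gamma/A_\kappa)\varphi''(\kappa)]+o(\rho^3)\). You then use the relation \(A_\kappa\varphi'''=3\gamma\varphi''\), continuity at exceptional centers and across \(A_\kappa=0\), and two integrations. Your first version, \(D_\varphi=70\varphi'''(\kappa)\rho^3+o(\rho^3)\), is not ``equivalent'' to the second. The second-moment mismatch is \(-420\gamma\rho/A_\kappa+O(\rho^2)\). After multiplication by \(\rho^2\varphi''(\kappa)/2\) it contributes the whole term \(-210\gamma\rho^3\varphi''(\kappa)/A_\kappa\), not an \(o(\rho^3)\) error. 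Only your second computation is correct.

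The genuine gap is on the \(W_6\) side. You never determine the \(\rho^0\) coefficient of the collision expansion, and the classification cannot be completed without it.

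\emph{The case \(d=0\).} Take tuples off the Hirota set with \(d=0\), for instance \(d=e=0\), \(ac\ne3\gamma B\). The \(1/\rho\) term then vanishes, and nonvanishing of \(W_6\) along the family rests entirely on the finite coefficient. That coefficient is \(9\gamma U_\kappa(ac-3\gamma B)/(2A_\kappa^3)-12e\). It is linear in \(U_\kappa\), not of your guessed form \(\propto(B-ac/(3\gamma))^2/A_\kappa^3\).

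\emph{The case \(d\ne0\).} This case also has a hole. If \(B=c=0\) then \(U_\kappa\equiv0\), so ``isolated zeros of \(U_\kappa\)'' is all of \(J\). You then need the finite coefficient \(-3d^2/A_\kappa-12e\). What rescues this case, unlike the spatial Calogero--Moser component, is that \(A_\kappa\) is nonconstant because \(\gamma\ne0\). Hence \(-3d^2-12eA_\kappa\) is a nonzero affine function of \(\kappa\).

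\emph{The fallback.} Appealing to Theorem~\ref{thm:temporal} with a non-collapsing family does not repair this. Nonvanishing somewhere on generic sextets gives \(D_\varphi=0\) only on an open set of spread-out sextets. It does not give the pointwise relation \(A_\kappa\varphi'''=3\gamma\varphi''\) at every center of \(J\), and you give no mechanism to pass from one to the other.

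\emph{How to fill the \(d=0\) gap.} The missing constant is accessible exactly there. From \eqref{eq:temp-den2}, the same partial-fraction argument as in \eqref{eq:local-local-open-identity} gives \(\sum_{\ell,\alpha}\Delta_{\ell\alpha}^{-1}=1/(\gamma P(\xi))\). It extends to product weights \(f(p_\ell)g(q_\alpha)\), which yields \(\cA_{\rm temp}=(3B+cS)(B-ac/(3\gamma))/(3\gamma P(\xi))\) on every generic sextet. Inserting its collision limit into \(W_6=-4\cA_{\rm temp}-12e\) reproduces the coefficient above. The \(d\)-dependent terms, including \(-3d^2/A_\kappa\), still require the chamber-resolved nine-channel expansion of Appendix~\ref{app:diagonal-expansion}.
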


\begin{proof}
By a \emph{center} we mean a wavenumber \(\kappa\in J\) that is the common
limit of all six external modes in the collision family constructed below.
Fix such a \(\kappa\) and put
\[
A_\kappa=a+3\gamma\kappa,\qquad U_\kappa=B+c\kappa.
\]
For \(A_\kappa\ne0\) and \(\rho>0\), use
\[
p_\rho=\kappa+\rho(-5,-2,7),\qquad
q_\rho=\kappa+\rho(-7,x_{\kappa,\rho},7-x_{\kappa,\rho}).
\]
Exact resonance reduces to
\[
-2A_\kappa(x-2)(x-5)
+21\gamma\rho(10+7x-x^2)=0.
\]
The implicit-function theorem gives the unique branch near \(2\),
\begin{equation}
x_{\kappa,\rho}
=2-\frac{70\gamma}{A_\kappa}\rho+O(\rho^2).
\label{eq:diagonal-x}
\end{equation}
For all sufficiently small \(\rho>0\), the sextet is disjoint and
non-action-only, its chamber is fixed, and its virtual denominators are
nonzero.  Indeed, after \(k=\kappa+\rho y\), they are \(\rho^2\) times the
selected denominators with effective quadratic coefficient \(A_\kappa\).

The nine-channel expansion is derived in
Appendix~\ref{app:diagonal-expansion}; Online Resource~1 independently
reconstructs it in exact characteristic-zero arithmetic.  It gives
\begin{equation}
\boxed{
\begin{aligned}
W_6(p_\rho;q_\rho)
={}&\frac{2dU_\kappa}{7A_\kappa\rho}
+\frac{9\gamma U_\kappa(ac-3\gamma B)}{2A_\kappa^3}\\
&-\frac{3d^2}{A_\kappa}-12e+O(\rho).
\end{aligned}}
\label{eq:diagonal-W}
\end{equation}
The same branch and Taylor's theorem give
\begin{equation}
\boxed{
D_\varphi(p_\rho;q_\rho)
=70\rho^3\left[
\varphi'''(\kappa)-\frac{3\gamma}{A_\kappa}\varphi''(\kappa)
\right]+o(\rho^3).}
\label{eq:diagonal-Dphi}
\end{equation}

Outside the cancellation set, the expansion is nonzero for every
sufficiently small positive \(\rho\) at all but finitely many centers
\(\kappa\).  Indeed, its coefficients are rational functions of \(\kappa\),
with denominators that are powers of the affine function \(A_\kappa\).
If \(d\ne0\) and \(U_\kappa\not\equiv0\), the \(1/\rho\)
coefficient is nonzero away from the at most one zero of the affine function
\(U_\kappa\) and the possible zero of \(A_\kappa\).  If \(d\ne0\) and
\(U_\kappa\equiv0\), the finite coefficient
\(-3d^2/A_\kappa-12e\) is not identically zero and, after multiplication by
\(A_\kappa\), has a nonzero affine numerator.  If \(d=0\), the finite
coefficient
\[
\frac{9\gamma U_\kappa(ac-3\gamma B)}{2A_\kappa^3}-12e
\]
can vanish identically only when \(e=0\) and \(ac=3\gamma B\); otherwise,
after multiplication by \(A_\kappa^3\), its numerator is a nonzero polynomial
and therefore has only finitely many real zeros.  At every remaining center,
the first displayed nonzero Laurent coefficient ensures
\(W_6(p_\rho;q_\rho)\ne0\) for all sufficiently small positive \(\rho\).
The only possible exceptions are the finitely many zeros just described and
the possible point \(A_\kappa=0\).

At every remaining center, condition~\eqref{eq:ZS} and
\eqref{eq:diagonal-Dphi} give
\[
G(\kappa):=A_\kappa\varphi'''(\kappa)-3\gamma\varphi''(\kappa)=0.
\]
Because \(\varphi\in C^3\), \(G\) is continuous.  It therefore vanishes at
the finitely many exceptional centers and at the possible point
\(A_\kappa=0\) as well.  Off that point,
\[
\left(\frac{\varphi''}{A_\kappa}\right)'=0.
\]
Continuity of \(\varphi'''\) matches the constants across
\(A_\kappa=0\), so
\(\varphi''=c_*(a+3\gamma k)\) for one constant \(c_*\) on the connected interval.
Integrating twice proves \eqref{eq:temporal-symbol-classification}.
\end{proof}

For \(\omega_\gamma(k)=ak^2+\gamma k^3\), the functions in
\eqref{eq:temporal-symbol-classification} are exactly the symbols of \(N\),
\(P\), and \(H_2\). At \(\gamma=0\), the separate spatial theorem gives the span
\(\{1,k,k^2\}\).

\subsection{Craig--Guyenne--Sulem temporal coefficient map}

Craig, Guyenne, and Sulem derive a canonical temporal Hamiltonian Dysthe
equation in their equations~(17)--(21) and simplify it in Section~5.2 after
subtracting wave action and impulse \cite{CraigGuyenneSulem2021}.  With slow
time \(\tau=\eps^2t\) and effective Hamiltonian
\(K=\eps^{-3}\widehat H\), their symplectic equation~(19) becomes
\(\ii u_\tau=\delta K/\delta\bar u\), exactly the convention
\eqref{eq:canonical}.  Comparing the Hamiltonian terms gives
\begin{equation}
\boxed{
a=\frac{\omega_0''}{2},\quad
\gamma=\frac{\eps\omega_0'''}6,\quad
B=k_0^3,\quad c=3\eps k_0^2,\quad
d=\eps k_0^2,\quad e=0.
}
\label{eq:CGS-map}
\end{equation}
For deep water, \(\omega(k)=\sqrt{g_{\rm grav}k}\).  Hence
\[
a=-\frac{g_{\rm grav}^2}{8\omega_0^3},\qquad
\gamma=\frac{\eps g_{\rm grav}^3}{16\omega_0^5}.
\]
In the nondimensional normalization
\(g_{\rm grav}=k_0=\omega_0=1\),
\begin{equation}
\boxed{(a,\gamma,B,c,d,e)
=\left(-\frac18,\frac\eps{16},1,3\eps,\eps,0\right).}
\label{eq:CGS-point}
\end{equation}
The PDE obtained from \eqref{eq:PDE} agrees term by term with their
Section~5.2 equation.  This identifies the canonical
Craig--Guyenne--Sulem branch with linear dispersion truncated at cubic order.
Since
\(d=\eps\ne0\), its coefficient tuple lies outside the simultaneous temporal
cancellation component.

\begin{corollary}[Cubic-dispersion temporal Hamiltonian Dysthe six-wave
obstruction]
\label{cor:temporal-physical-obstruction}
Consider the Craig--Guyenne--Sulem temporal Hamiltonian Dysthe model with
linear dispersion truncated at cubic order and \(\eps\ne0\), whose normalized
coefficients are \cite{CraigGuyenneSulem2021}
\[
 (a,\gamma,B,c,d,e)
 =\left(-\frac18,\frac{\eps}{16},1,3\eps,\eps,0\right).
\]
For every formal conserved functional satisfying the assumptions of
Section~\ref{sec:ZS}, with quadratic leading symbol
\(\varphi\in C^3(J)\) on the connected spectral interval \(J\),
\begin{equation}
 \varphi(k)=\beta_0+\beta_1k+\beta_2\omega_\gamma(k).
 \label{eq:temporal-physical-symbol}
\end{equation}
Consequently this model cannot support an inverse-scattering hierarchy with
infinitely many linearly independent regular \(C^3\) quadratic leading
symbols.
\end{corollary}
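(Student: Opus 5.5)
The plan is to reduce the corollary to Theorem~\ref{thm:temporal-symbol}. Its only hypotheses are \(a\gamma\ne0\), the regularity assumptions of Section~\ref{sec:ZS}, \(\varphi\in C^3(J)\) on a connected interval, and that the coefficient tuple lies off the cancellation set \(d=e=0\), \(ac=3\gamma B\).

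First I will substitute the normalized Craig--Guyenne--Sulem tuple \eqref{eq:CGS-point}. This gives \(a=-1/8\ne0\) and \(\gamma=\eps/16\ne0\) for \(\eps\ne0\), so \(a\gamma\ne0\). Next, \(d=\eps\ne0\), so the tuple is not on the component \(d=e=0\), \(ac=3\gamma B\). (Incidentally \(ac=-3\eps/8\) and \(3\gamma B=3\eps/16\) also differ, but \(d\ne0\) already suffices.) Theorem~\ref{thm:temporal-symbol} therefore applies and yields \eqref{eq:temporal-physical-symbol} directly.

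The one point I would check explicitly is that the collision families in the proof of Theorem~\ref{thm:temporal-symbol} use centers \(\kappa\) ranging over \(J\). The argument there allows finitely many exceptional centers and the single point \(A_\kappa=a+3\gamma\kappa=0\), here \(\kappa=2/(3\eps)\), which might lie in \(J\). These are handled by continuity of \(G\) and of \(\varphi'''\), so no extra restriction on \(J\) is needed. I would also note that at this tuple \(U_\kappa=1+3\eps\kappa\) is not identically zero. Hence the \(1/\rho\) coefficient \(2dU_\kappa/(7A_\kappa\rho)\) in \eqref{eq:diagonal-W} is nonzero except at \(\kappa=-1/(3\eps)\) and \(\kappa=2/(3\eps)\). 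This gives a concrete version of the exceptional-set bookkeeping, and I expect this verification to be the only step needing care.

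For the final sentence, I will use the identification stated after Theorem~\ref{thm:temporal-symbol}: \(1\), \(k\), and \(\omega_\gamma(k)\) are the symbols of \(N\), \(P\), and \(H_2\). Every admissible quadratic leading symbol therefore lies in a span of dimension at most three. An inverse-scattering hierarchy with infinitely many linearly independent regular \(C^3\) quadratic leading symbols would need symbols outside that span, which is impossible. This is the same closing argument as in Corollary~\ref{cor:spatial-physical-obstruction}.
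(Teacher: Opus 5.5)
Your proposal is correct and follows the paper's proof: you check \(a\gamma\ne0\) and \(d=\eps\ne0\) to place the tuple off the cancellation component, invoke Theorem~\ref{thm:temporal-symbol}, and close with the same span argument as in Corollary~\ref{cor:spatial-physical-obstruction}. Your exceptional centers \(\kappa=-1/(3\eps)\) and \(\kappa=2/(3\eps)\) are computed correctly, although that bookkeeping is already handled inside the proof of the theorem and is not needed for the corollary.
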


\begin{proof}
Here \(d=\eps\ne0\), so the tuple lies outside the unique simultaneous
four- and six-wave cancellation component of
Theorem~\ref{thm:temporal}.  The diagonal blow-up theorem
\ref{thm:temporal-symbol} gives
\eqref{eq:temporal-physical-symbol}; the three terms are the symbols of
action, momentum, and the quadratic part of the Hamiltonian.  The
inverse-scattering conclusion follows as in
Corollary~\ref{cor:spatial-physical-obstruction}.
\end{proof}

\section{Uniform deformation of resonance patches}
\label{sec:deformation}

\subsection{A deformation parameter distinct from steepness}

Introduce \(\lambda\) by
\begin{equation}
 \omega_{\lambda,\eps}(k)=ak^2+\lambda\eps hk^3,
 \label{eq:lambda-dispersion}
\end{equation}
while retaining
\begin{equation}
 c=\eps C,\qquad d=\eps D,\qquad e=\eps^2E.
 \label{eq:Dysthe-scaling}
\end{equation}
The spatial slice is \(\lambda=0\); the cubic-dispersion temporal slice is
\(\lambda=1\).  The limit \(\lambda\to0\) turns off cubic dispersion only.
The physical limit \(\eps\to0\) turns off all higher-order Dysthe corrections
together.

\subsection{Proof of uniform persistence on resonance patches}

\begin{proof}[Proof of Theorem~\ref{thm:strong-open}]
For \(z=(p,q)\in U\), retain the notation
\[
P_z(X)=\prod_{i=1}^3(X-p_i),\qquad
Q_z(X)=\prod_{\alpha=1}^3(X-q_\alpha),\qquad
P_z-Q_z=C_0(z).
\]
Put
\[
\eta=\lambda\eps h,\qquad
\tau_\eta(z)=\frac{3\eta}{2a+3\eta S(z)},
\qquad
Q_{\eta,z}(X)=Q_z(X)+C_0(z)\tau_\eta(z)X.
\]
For \(|\eta|\) small, the three simple real roots of \(Q_z\) continue to
simple real analytic roots \(q_\alpha(\eta,z)\), uniformly on \(K\).  Define
\[
\Phi_{\lambda,\eps}(p,q)
=\bigl(p,q(\lambda\eps h;p,q)\bigr).
\]

The \(X^2\)-coefficient of \(Q_{\eta,z}\) is unchanged, so the outgoing
momentum remains \(S\).  If
\(\Lambda=e_2(p)-e_2(q_\eta)=-C_0\tau_\eta\), its constant term gives
\(e_3(p)-e_3(q_\eta)=-C_0\).  Newton's identities therefore yield
\[
\sum p_i^2-\sum q_{\eta,\alpha}^2=-2\Lambda,
\qquad
\sum p_i^3-\sum q_{\eta,\alpha}^3=-3S\Lambda-3C_0.
\]
Consequently the frequency difference for
\(\omega_\eta(k)=ak^2+\eta k^3\) is
\[
C_0\bigl[(2a+3\eta S)\tau_\eta-3\eta\bigr]=0.
\]
Thus every image point is an exact resonance.  Root separation and the
strict inequalities defining \(U\) persist uniformly on \(K\), proving the
fixed-chamber and distinct-slot statements, together with a uniform positive
lower bound for the absolute values of the internal denominators.  The map is
injective:
the incoming polynomial is unchanged, \(C_0=P(0)-Q_\eta(0)\) is recovered
from the image, and then \(Q=Q_\eta-C_0\tau_\eta X\).  Hence the map is an
analytic embedding: the displayed recovery formula defines an analytic
inverse on a relative neighborhood of the image.  Since the source and target
resonance manifolds have equal dimension, it is a local diffeomorphism onto
its relative image.

It remains to calculate the first variation.  At \(\eps=0\),
\eqref{eq:sp-den} holds.  Differentiating the two vertices, with the
cross-difference sums defined in \eqref{eq:spatial-cross-sums}, gives
\begin{equation}
\dot T^{(1)}_{\ell\alpha}+\dot T^{(2)}_{\ell\alpha}
=C\left(S-\frac{p_\ell+q_\alpha}{2}\right)
-\frac D2\left(R^{(1)}_{\ell\alpha}+R^{(2)}_{\ell\alpha}\right).
\label{eq:open-vertex-variation}
\end{equation}
The contribution from the local derivative nonlinearity cancels by the two
resonance moments:
\[
\sum_{\ell,\alpha}(q_\alpha-p_\ell)
\left(S-\frac{p_\ell+q_\alpha}{2}\right)=0.
\]
The chamber identity \(M_0=4C_0/L\) then gives
\[
\left.\partial_\eps\cA\right|_{\rm mean\ flow}
=-\frac{BD}{aL}.
\]

The remaining possible first-order contribution comes from the constant
vertex \(B\), cubic dispersion, and resonance motion.  Along the explicit
embedding,
\[
P(q_{\eta,\alpha})=C_0(1-\tau_\eta q_{\eta,\alpha})
\]
and the exact channel factorization is
\[
\Delta_{\ell\alpha}(\eta)
=(q_{\eta,\alpha}-p_i)(q_{\eta,\alpha}-p_j)
\bigl[2a+3\eta(S-p_\ell)\bigr].
\]
A partial-fraction sum gives, for \(\eta\ne0\), the exact identity
\begin{equation}
\boxed{
\sum_{\ell,\alpha}\frac1{\Delta_{\ell\alpha}(\eta)}
=\frac1{\eta P(1/\tau_\eta)}.
}
\label{eq:local-local-open-identity}
\end{equation}
For completeness, put \(z=1/\tau_\eta\).  Then
\(2a+3\eta(S-p_\ell)=3\eta(z-p_\ell)\) and
\[
(q-p_i)(q-p_j)=\frac{P(q)}{q-p_\ell}
=\frac{C_0(z-q)}{z(q-p_\ell)}.
\]
Consequently the double sum reduces to
\[
\frac{z}{3\eta C_0}
\left[3\frac{Q_\eta'(z)}{Q_\eta(z)}
-3\frac{P'(z)}{P(z)}\right].
\]
The identities \(Q_\eta(z)=Q(z)+C_0=P(z)\) and
\(Q_\eta'(z)=P'(z)+C_0/z\) give
\eqref{eq:local-local-open-identity}.
Since \(P\) is monic,
\[
 P(1/\tau_\eta)
 =\tau_\eta^{-3}
 \left(1-S\tau_\eta+e_2(p)\tau_\eta^2-e_3(p)\tau_\eta^3\right)
 =\tau_\eta^{-3}\bigl(1+O_K(\tau_\eta)\bigr).
\]
The parenthetical factor is uniformly bounded away from zero for small
\(\eta\), and \(\tau_\eta=O_K(\eta)\).  Therefore the right-hand side of
\eqref{eq:local-local-open-identity} is \(O_K(\eta^2)\) and extends
continuously by zero at \(\eta=0\).
Thus the \(B^2\) local--local part has no first variation.
Combining this with \eqref{eq:channel-sum}, while observing that the bare
sextic is \(O(\eps^2)\), proves \eqref{eq:open-patch-expansion}.

For part~\textup{(ii)}, the construction above is analytic on a relative
neighborhood of \(K=\overline V\), and the explicit inverse
\(Q=Q_\eta-C_0\tau_\eta X\) shows that its restriction to \(V\) is a local
diffeomorphism between equal-dimensional resonance manifolds.  Therefore
\(\Phi_{\lambda,\eps}(V)\) is relatively open in
\(\mathcal R_{\lambda,\eps}\).  The uniform estimates on \(K\) supply the
same chamber and denominator bounds.  If \(BD\ne0\), continuity and
compactness give a positive lower bound for \(|4BD/(aL)|\) on \(K\); the
uniform remainder then yields the stated \(c_K|\eps|\) estimate.
\end{proof}

\paragraph{Example.}
For \(p=(-5,-2,7)\) and
\(q_{\lambda,\eps}=(-7,x_{\lambda,\eps},7-x_{\lambda,\eps})\), exact
resonance reduces to
\[
 -2a(x-2)(x-5)+21\lambda\eps h(10+7x-x^2)=0.
\]
The branch through \(x=2\) satisfies
\[
 x_{\lambda,\eps}
 =2-\frac{70\lambda h}{a}\eps+O(\eps^2).
\]
Since \(L=14\), Theorem~\ref{thm:strong-open} gives
\[
 W_{6,\lambda,\eps}(p;q_{\lambda,\eps})
 =\frac{2BD}{7a}\eps+O(\eps^2).
\]

\begin{remark}[Independence of the resonance parametrization]
\label{rem:intrinsic-first-variation}
Any two analytic identifications of the deformed and spatial resonance
manifolds differ at first order by a vector tangent to \(\mathcal R_0\).
Since the unperturbed constant-vertex coefficient vanishes identically on
\(\mathcal R_0\), its tangential derivative is zero.  Thus the first-variation
function in \eqref{eq:open-patch-expansion} is intrinsic, although the
particular polynomial embedding is chosen for transparency.
\end{remark}

\section{Uniform narrowband matching with exact deep-water Euler}
\label{sec:euler-matching}

We now compute the complete degree-six coefficient of the formal Euler normal
form after cubic and quartic elimination and compare it uniformly with the
temporal Dysthe coefficient on compact resonance patches.  The calculation
uses only the finite normal-form expansion through degree six; no convergent
infinite normal-form transformation is assumed.

Throughout this section, \(\mu>0\) denotes relative bandwidth.  It is distinct
from the general parameter \(\eps\) used in
Theorem~\ref{thm:strong-open}; the temporal Dysthe comparison is obtained by
setting \(\lambda=1\) and \(\eps=\mu\) only within this section.

\subsection{Exact Euler resonance blow-up and the matching theorem}

In units \(g=k_0=1\), the infinite-depth irrotational water-wave Hamiltonian
in Zakharov surface variables is
\begin{equation}
 \mathcal H^{\rm E}(\eta,\xi)
 =\frac12\int_{\mathbb R}
 \left[\eta^2+\xi G(\eta)\xi\right]\dd x,
 \label{eq:Euler-Hamiltonian}
\end{equation}
where \(G(\eta)\) is the Dirichlet--Neumann operator
\cite{Zakharov1968,CraigSulem1993}.  Write
\(\mathcal H^{\rm E}=H_2+H_3+H_4+\cdots\).  To avoid collision with the
dispersion coefficient \(a\), denote the canonical Euler normal variable by
\(b(k)\).  With \(s(k)=|k|^{-1/4}\),
\begin{equation}
 \widehat\eta(k)=\frac{b(k)+\overline{b(-k)}}{\sqrt2\,s(k)},
 \qquad
 \widehat\xi(k)=-\frac{\ii s(k)}{\sqrt2}
 \left[b(k)-\overline{b(-k)}\right].
 \label{eq:Euler-normal-map}
\end{equation}
We use the Fourier measures \eqref{eq:Gamma4} and \eqref{eq:Gamma6}, the
bracket \eqref{eq:bracket} with \(\alpha\) replaced by \(b\), and the ordered
sextic convention \eqref{eq:H6}.  Thus \(W_6^{\rm E}\) is defined by
\begin{equation}
 H_{6,{\rm NF}}^{\rm E}
 =\frac1{3!3!}\int W_6^{\rm E}(p;q)
 \prod_{j=1}^3\overline{b(q_j)}
 \prod_{i=1}^3b(p_i)\dd\Gamma_6,
 \label{eq:Euler-W6-definition}
\end{equation}
after elimination of the nonresonant cubic and then nonresonant quartic
Hamiltonians.  We call this the complete connected Euler coefficient: it
includes the direct sextic vertex, every connected cubic-elimination term, and
the quartic normal-form contractions contributing to the ordered disjoint
six-leg target.  These are the classical canonical normal-form conventions of
\cite{Krasitskii1994,CraigWorfolk1995,CraigSulem2016}.

Remove the carrier and group-velocity terms from the exact dispersion relation
restricted to positive wavenumbers and define
\begin{equation}
 \Omega_\mu^{\rm E}(\kappa)
 =\frac{\sqrt{1+\mu\kappa}-1-\mu\kappa/2}{\mu^2},
 \qquad
 \Omega_0^{\rm E}(\kappa)=-\frac18\kappa^2.
 \label{eq:Euler-rescaled-dispersion}
\end{equation}
Taylor's theorem gives, uniformly for \(\kappa\) in a compact set,
\begin{equation}
 \Omega_\mu^{\rm E}(\kappa)
 =-\frac18\kappa^2+\frac\mu{16}\kappa^3
 -\frac{5\mu^2}{128}\kappa^4+O_K(\mu^3).
 \label{eq:Euler-dispersion-expansion}
\end{equation}
The first two terms are the temporal Dysthe dispersion with
\(a=-1/8\), \(h=1/16\).

\begin{theorem}[Uniform Euler--Dysthe six-wave matching]
\label{thm:euler-dysthe-matching}
Let \(U\subset\mathcal R_0^{\rm gen}\) be a connected fixed chamber and let
\(K\Subset U\).  At the coefficient point
\eqref{eq:intro-CGS-source}, there are \(\mu_K>0\) and restrictions to \(K\)
of real-analytic resonance identifications
\[
 \Psi_\mu:K\longrightarrow\widetilde{\mathcal R}^{\rm E}_\mu,
 \qquad
 \Phi_\mu:K\longrightarrow\mathcal R_{1,\mu},
 \qquad 0\le\mu<\mu_K,
\]
defined on a relative neighborhood of \(K\), with
\(\Psi_0=\Phi_0=\operatorname{id}\) and
\(\Psi_\mu-\Phi_\mu=O_K(\mu^2)\).  Here
\begin{equation}
 \widetilde{\mathcal R}^{\rm E}_\mu
 =\left\{(p,q):\sum p_i=\sum q_\alpha,\quad
 \sum_i\Omega_\mu^{\rm E}(p_i)
 =\sum_\alpha\Omega_\mu^{\rm E}(q_\alpha)\right\}.
 \label{eq:Euler-resonance-patch}
\end{equation}
For \(\mu>0\), the physical wavenumbers
\(1+\mu\Psi_\mu(z)_j\) are positive and distinct, the chamber is unchanged,
and all nine rescaled virtual denominators are uniformly bounded away from
zero.  If \(\widehat W_6^{\rm D}\) is the ordered temporal Dysthe coefficient,
then
\begin{equation}
 \boxed{
 W_6^{\rm E}\bigl(1+\mu\Psi_\mu(z)\bigr)
 =\mu^{-2}\widehat W_6^{\rm D}
   \bigl(\Phi_\mu(z)\bigr)+O_K(1)
 }
 \label{eq:Euler-Dysthe-strong-match}
\end{equation}
and
\begin{equation}
 \boxed{
 \frac\mu4W_6^{\rm E}\bigl(1+\mu\Psi_\mu(z)\bigr)
 =\frac1{4\mu}\widehat W_6^{\rm D}
   \bigl(\Phi_\mu(z)\bigr)+O_K(\mu)
 =-\frac8{L(z)}+O_K(\mu).
 }
 \label{eq:Euler-Dysthe-normalized-match}
\end{equation}
All remainders are uniform for \(z\in K\).
\end{theorem}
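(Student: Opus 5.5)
The plan has three stages: build the two resonance identifications, compare the Euler normal form with the temporal Dysthe normal form sector by sector, and then evaluate the leading term using Theorem~\ref{thm:strong-open}.

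\textbf{Step 1: the embeddings.} I would construct \(\Phi_\mu\) exactly as in the proof of Theorem~\ref{thm:strong-open}, taking \(\lambda=1\) and \(\eps=\mu\); this gives the linear-difference deformation \(Q_{\eta,z}=Q_z+C_0\tau_\eta X\).

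For \(\Psi_\mu\) I would keep the incoming triple \(p\) fixed and solve for the outgoing triple \(q\) with the implicit function theorem. The unknowns are the two conditions
\[
\sum p_i=\sum q_\alpha,\qquad \sum\Omega^{\rm E}_\mu(p_i)=\sum\Omega^{\rm E}_\mu(q_\alpha).
\]
To make this well posed, I would parametrize \(q\) by a one-parameter transversal (for instance \(C_0\) held fixed) on the image of \(\Phi_\mu\). The nondegeneracy comes from the genericity (rank) condition built into \(\mathcal R_0^{\rm gen}\).

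By \eqref{eq:Euler-dispersion-expansion}, \(\Omega^{\rm E}_\mu-\omega_{1,\mu}=O_K(\mu^2)\) in \(C^1\) on compacts. The implicit function estimate then gives \(\Psi_\mu-\Phi_\mu=O_K(\mu^2)\). Positivity of the physical wavenumbers \(1+\mu\Psi_\mu(z)_j\), distinctness of the modes, preservation of the chamber, and the lower bound on the denominators all hold for \(\mu<\mu_K\) by compactness of \(K\).

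\textbf{Step 2: decomposition of \(W_6^{\rm E}\).} I would split the Euler coefficient into three parts.
\begin{enumerate}
\item[(a)] The nine quartic--quartic contractions built from the Krasitskii/Zakharov reduced quartic kernel \(T^{\rm E}\), i.e.\ the kernel obtained after cubic elimination.
\item[(b)] The direct sextic vertex, together with the connected terms from cubic elimination. These are \(\{H_3,F_3\}\)-type iterates at degree six: chains of three or four cubic vertices, and mixed cubic--quintic terms.
\item[(c)] Contractions through nonmatching quartic sectors, namely \(4\to0\), \(3\to1\) and conjugate-type virtual modes.
\end{enumerate}
For (b) and (c) the intermediate modes lie near wavenumber \(0\), \(2\) or \(-1\). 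The corresponding mismatches are therefore \(O(1)\) and bounded away from zero, and the kernels are smooth near the carrier. Hence these sectors are \(O_K(1)\).

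The delicate point in (b) is that cubic vertices with a small virtual momentum \(\mu(p_i-q_\alpha)\) produce near-singular factors. Their contribution is the mean-flow term, and it must be counted exactly once: it is already included in the reduced kernel \(T^{\rm E}\) of sector (a) through the \(|k_r-k_s|\) structure. So I would organize the cubic elimination first, absorbing all long-wave (mean-flow) pieces into \(T^{\rm E}\), and only then bound the genuinely sextic remainders.

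\textbf{Step 3: matching sector (a) and evaluating.} Rescale \(k=1+\mu\kappa\). In this variable the virtual denominators are \(\mu^2\Delta^{\rm D}_{\ell\alpha}\bigl(1+O(\mu)\bigr)\). The kernel satisfies \(T^{\rm E}=T^{\rm D}+O_K(\mu^2)\) in the Dysthe normalization: its constant, derivative and \(|\cdot|\) terms match \eqref{eq:CGS-map} after the normal-variable rescaling, and I would take this expansion from \cite{CraigGuyenneSulem2021}.

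Combined with \(\Psi_\mu-\Phi_\mu=O(\mu^2)\), this gives
\[
\text{(a)}=\mu^{-2}\widehat W_6^{\rm D}(\Phi_\mu(z))+O_K(1),
\]
which is \eqref{eq:Euler-Dysthe-strong-match}. The error in (a) is \(\mu^{-2}\) times an \(O(\mu^2)\) perturbation. This is the point of requiring second-order agreement of the embeddings: any \(O(\mu)\) discrepancy in the embeddings or the kernel would leave an \(O(\mu^{-1})\) error and break \eqref{eq:Euler-Dysthe-strong-match}.

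Finally, Theorem~\ref{thm:strong-open} gives \(\widehat W_6^{\rm D}=\mu\,4BD/(aL)+O(\mu^2)\). With \(B=D=1\) and \(a=-1/8\) this is \(-32\mu/L\). Multiplying by \(\mu/4\) gives \(\tfrac{1}{4\mu}\widehat W_6^{\rm D}=-8/L+O_K(\mu)\), which proves \eqref{eq:Euler-Dysthe-normalized-match}.

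\textbf{Main obstacle.} I expect the hardest step to be the uniform boundedness of sector (b) while avoiding double counting of the mean-flow singularity. I would first try organizing all cubic-chain terms by the momentum of the internal line. Chains whose internal lines carry \(O(\mu)\) momentum would be reduced, via cancellation of apparent poles, to the \(|k_r-k_s|\) part of \(T^{\rm E}\). The remaining chains have \(O(1)\) denominators and are therefore bounded.
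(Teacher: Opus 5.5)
\textbf{Overall.} Your architecture matches the paper's. It has implicit-function resonance identifications agreeing to \(O_K(\mu^2)\), isolation of the nine near-resonant contractions \(-4\mu^{-2}\sum T^{\rm E}_{1,\ell\alpha}T^{\rm E}_{2,\ell\alpha}/\widehat\Delta^{\rm E}_{\ell\alpha}\), an \(O_K(\mu^2)\) match of \(T^{\rm E}\) with \(T^{\rm D}_\mu\), and the leading term from Theorem~\ref{thm:strong-open}. Your Step~1 variant is fine and slightly more economical than the paper's. You take \(\Phi_\mu\) to be the polynomial embedding with \(\lambda=1\), \(\eps=\mu\), and build \(\Psi_\mu\) with \(p\) and \(C_0\) held fixed; the \(q\)-Jacobian of \((e_1,e_3,\sum\Omega^{\rm E}_0)\) reduces to a Vandermonde determinant. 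Theorem~\ref{thm:strong-open} then applies verbatim, so you do not need Remark~\ref{rem:intrinsic-first-variation}.

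\textbf{The gap is sector (b).} Your first argument is that intermediate modes near \(0\), \(2\), \(-1\) have mismatches bounded away from zero and smooth kernels. This is false for the mean-flow line. For legs \(1+\mu p_i\), \(1+\mu q_\alpha\), \(\mu(p_i-q_\alpha)\), the three-wave mismatch is of exact order \(\mu^{1/2}\), and in normal variables that leg carries \(|k|^{\pm1/4}\), \(|k|\), \(\operatorname{sgn}k\). Your fallback is to reduce chains with \(O(\mu)\) internal momentum ``via cancellation of apparent poles, to the \(|k_r-k_s|\) part of \(T^{\rm E}\)''. This is both unproved and misdirected. The mean-flow part of \(T^{\rm E}\) enters \(W_6^{\rm E}\) only through the quartic contractions \(\tfrac12\{P_4,F_4\}\). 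The sextic terms \(\ad_{K_3}H_5\), \(\ad_{K_3}^2H_4\), \(\ad_{K_3}^3H_3\), \(\ad_{K_3}^4H_2\) with a mean-flow internal line are separate degree-six contributions. They cannot be absorbed into a quartic kernel without double counting, and they still have to be bounded directly, which your plan does not do.

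\textbf{The missing idea.} Perform the cubic elimination in the physical variables \((\eta,\xi)\). There the Craig--Sulem generator \(K_3=\frac12\int(-\ii\operatorname{sgn}(\mathsf D_x)\eta)^2|\mathsf D_x|\xi\dd x\) solves \(\{H_2,K_3\}=H_3\) with no divisor at all. The degree-six part of \(e^{-\ad_{K_3}}\mathcal H^{\rm E}\) is then the finite Lie polynomial \(\sum_{j=0}^4\frac{(-1)^j}{j!}\ad_{K_3}^jH_{6-j}\). It is built only from multiplication and the multipliers \(\mathsf D_x\), \(|\mathsf D_x|\), \(\operatorname{sgn}(\mathsf D_x)\). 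Internal contractions are taken before the map \eqref{eq:Euler-normal-map}, so internal legs never acquire \(|k|^{-1/4}\) or a small divisor. The factors \(|k_j|^{\pm1/4}\) appear only on the six external legs near the carrier. Every such coefficient is therefore a polynomial in Fourier arguments, their absolute values and signs. It is uniformly bounded on \(K\) even as an internal wavenumber tends to zero (Lemma~\ref{lem:Euler-symbol-class}), and no pole bookkeeping is needed.

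\textbf{Smaller repairs.}
\begin{itemize}
\item In sector (c), your conclusion is right, but the internal line of a nonmatching quartic contraction sits near \(\pm1\) or \(\pm3\), not \(0\) or \(2\). The uniform lower bound comes from the carrier count \(\Delta_0=S+\sigma_0\sqrt{|S|}\), \(S\in\{\pm1,\pm3\}\).
\item \(\{Z_4,F_4\}\) must be disposed of separately: positive resonant quartets are permutation quartets, hence action-only.
\item Your claim that the virtual denominators are \(\mu^2\Delta^{\rm D}_{\ell\alpha}(1+O(\mu))\) is too weak for your own conclusion, since it leaves an \(O(\mu^{-1})\) error. You need \(\widehat\Delta^{\rm E}_{\ell\alpha}=\Delta^{\rm D}_{\ell\alpha}+O_K(\mu^2)\). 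This follows from \eqref{eq:Euler-dispersion-expansion} together with \(\Psi_\mu-\Phi_\mu=O_K(\mu^2)\).
\end{itemize}
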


At the coefficient point in the theorem,
\(BD/(aL)=-8/L\).

\begin{lemma}[Uniform exact-resonance embedding]
\label{lem:Euler-embedding}
The resonance identifications in
Theorem~\ref{thm:euler-dysthe-matching} exist and have the stated agreement
through first order, together with the chamber and denominator properties.
\end{lemma}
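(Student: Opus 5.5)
For \(\Phi_\mu\) I will take the embedding already constructed in the proof of Theorem~\ref{thm:strong-open}, at \(\lambda=1\), \(\eps=\mu\), \(h=1/16\), \(a=-1/8\). Thus \(\Phi_\mu(p,q)=(p,q_\eta)\), with \(q_\eta\) the roots of \(Q_{\eta,z}(X)=Q_z(X)+C_0\tau_\eta X\) and \(\eta=\mu/16\). Exactness, chamber preservation and the uniform denominator bounds are then inherited verbatim from that proof.

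For \(\Psi_\mu\) I keep the incoming triple \(p\) fixed as well. I then solve for the outgoing triple by the implicit-function theorem applied to the two-equation system
\[
\sum_\alpha q_\alpha=S(p),\qquad
F_\mu(p,q):=\sum_i\Omega^{\rm E}_\mu(p_i)-\sum_\alpha\Omega^{\rm E}_\mu(q_\alpha)=0 ,
\]
posed on the one-parameter family \(q=q_{\eta}(z,t)\) obtained by moving \(t\) in place of \(\tau_\eta\). In other words, I look for \(\Psi_\mu(z)=(p,q(t^{\rm E}_\mu(z)))\), where \(q(t)\) are the roots of \(Q_z(X)+C_0tX\). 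This reduces the problem to one scalar unknown \(t\) on a slice transverse to \(\mathcal R_0\). The following steps set this up.

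\begin{enumerate}
\item \emph{Analyticity.} The function \(\Omega^{\rm E}_\mu(\kappa)=\mu^{-2}\bigl(\sqrt{1+\mu\kappa}-1-\mu\kappa/2\bigr)\) is real-analytic in \((\mu,\kappa)\) for \(|\mu\kappa|<1\), including \(\mu=0\). So \(G(\mu,z,t):=F_\mu(p,q(t))\) is analytic on a neighborhood of \(\{0\}\times K\times\{0\}\).
\item \emph{Exact identity for the truncated dispersion.} From the Newton-sum calculation in the proof of Theorem~\ref{thm:strong-open}, for the cubic truncation \(\omega_\eta=ak^2+\eta k^3\) the frequency mismatch along \(q(t)\) is exactly
\[
C_0\bigl[(2a+3\eta S)t-3\eta\bigr].
\]
\item \emph{Expansion of the Euler mismatch.} By \eqref{eq:Euler-dispersion-expansion},
\[
G(\mu,z,t)=C_0\bigl[(2a+3\eta S)t-3\eta\bigr]+\mu^2 R(\mu,z,t),
\]
with \(R\) analytic and containing the quartic moment terms. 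Here \(\partial_tG=C_0(2a+O(\mu))\ne0\), because \(C_0\) is bounded away from zero on \(K\) (genuine non-permutation resonances).
\item \emph{Solving for \(t\).} The implicit-function theorem gives an analytic \(t^{\rm E}_\mu(z)\) with \(t^{\rm E}_\mu=\tau_\eta+O_K(\mu^2)\). Hence \(\Psi_\mu-\Phi_\mu=O_K(\mu^2)\).
\item \emph{Uniformity on \(K\).} The constants are uniform by compactness of \(K\). Injectivity, the local-diffeomorphism property and the equality of dimensions follow as before, using the explicit inverse \(Q=Q_t-C_0tX\).
\end{enumerate}

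The qualitative properties then follow by perturbation. Simple roots move analytically and by \(O_K(\mu)\). Therefore the strict inequalities defining the chamber, together with the strict positivity of the gaps between the six external modes, persist for \(\mu<\mu_K\). The nine rescaled virtual denominators are computed with \(\Omega^{\rm E}_\mu\). They converge uniformly to the quadratic denominators \eqref{eq:sp-den}, which are bounded away from zero on \(K\). The physical wavenumbers \(1+\mu\Psi_\mu(z)_j\) are positive for \(\mu<\mu_K\), since \(K\) is bounded. They are distinct because the rescaled modes are distinct and \(\mu>0\).

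The main obstacle I expect is the solvability and uniformity in step~4. The trouble is that the \(\mu^{-2}\) normalization of \(\Omega^{\rm E}_\mu\) hides the \(\mu\)-dependence. One has to check that \(G\) has no singular terms as \(\mu\to0\), and that the derivative in the transverse direction is \(2aC_0\) rather than degenerate. If the one-parameter slice \(q(t)\) turns out to be awkward, I would instead apply the implicit-function theorem directly to the two equations in the full variable \(q\), with \(p\) fixed. There the Jacobian at \(\mu=0\) is a Vandermonde-type matrix in the distinct \(q_\alpha\), which is nonsingular on the generic patch. After that I would project onto the transverse direction to compare with \(\Phi_\mu\).
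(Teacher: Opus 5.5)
Your argument is correct, but it takes a different route from the paper's. The paper works in a coordinate chart. It fixes \(p\) and one outgoing mode, eliminates a second outgoing mode by momentum, and solves the frequency equation for the third; at \(\mu=0\) the relevant derivative is \(2a\) times a difference of two distinct outgoing modes. It builds both \(\Psi_\mu\) and \(\Phi_\mu\) by this same chart construction, so the \(O_K(\mu^2)\) agreement follows from uniqueness once the two implicit equations are seen to agree through first order.

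You instead move along the polynomial slice \(Q_z+C_0tX\). On that slice the cubic-truncation mismatch is exactly \(C_0[(2a+3\eta S)t-3\eta]\). Three things follow:
\begin{itemize}
\item the Dysthe solution is explicit, \(t=\tau_\eta\);
\item the transversality constant is \(2aC_0\), which is nonzero because \(K\) avoids permutation resonances;
\item the agreement reduces to \(G(\mu,z,\tau_\eta)=O_K(\mu^2)\) together with a uniform lower bound on \(\partial_tG\).
\end{itemize}

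The concrete gain of your route is that your \(\Phi_\mu\) is literally the embedding of Theorem~\ref{thm:strong-open} at \(\lambda=1\), \(\eps=\mu\), \(h=1/16\). So \eqref{eq:open-patch-expansion} applies at \(\Phi_\mu(z)\) directly. The paper's chart-built \(\Phi_\mu\) instead needs Remark~\ref{rem:intrinsic-first-variation} to import that expansion in the proof of Theorem~\ref{thm:euler-dysthe-matching}. The paper's route, in turn, does not depend on the special polynomial structure, and it makes injectivity immediate: both resonance manifolds are local graphs over the same coordinates \((p,q_1)\).

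One small correction. Your inverse does not transfer ``as before'', because \(t^{\rm E}_\mu(z)\) depends on all of \(z\), not only on \(S\). However, \(C_0t=e_2(\tilde q)-e_2(p)\) can be read off the image point \((p,\tilde q)\). Hence \(Q=\tilde Q-\bigl(e_2(\tilde q)-e_2(p)\bigr)X\) is still an explicit analytic inverse, and uniqueness in the implicit-function theorem shows it is two-sided on a relative neighborhood of the image.
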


\begin{proof}
Fix the incoming triple and one outgoing coordinate, use momentum to remove a
second outgoing coordinate, and solve the frequency constraint for the third.
At \(\mu=0\), differentiation with respect to that coordinate produces a
nonzero multiple of the difference between two distinct outgoing modes.
This is bounded away from zero on \(K\).  The analytic implicit-function
theorem, applied on finitely many neighborhoods covering \(K\), gives a
uniform analytic Euler embedding.  The same construction applied to
\(-\kappa^2/8+\mu\kappa^3/16\) gives \(\Phi_\mu\).  Expansion
\eqref{eq:Euler-dispersion-expansion} shows that the two implicit equations
agree through first order; uniqueness therefore gives
\(\Psi_\mu-\Phi_\mu=O_K(\mu^2)\).  Every chamber inequality and every
quadratic virtual denominator is strict on \(K\), so compactness preserves
the signs and a common positive denominator bound for sufficiently small
\(\mu\).  Positivity and distinctness of the physical wavenumbers follow in
the same way.
\end{proof}

\subsection{Uniform matching of the quartic symbols}

Let \(T^{\rm E}\) denote the ordered \(2\to2\) Euler kernel after elimination
of the cubic Hamiltonian.  Craig and Sulem identify the cubic normal-form
generator and its quartic output.  With \(\mathsf D_x=-\ii\partial_x\), in the
present sign convention this is
\begin{equation}
 H_4^{(3)}(\eta,\xi)
 =H_4(\eta,\xi)-H_4(-\ii\operatorname{sgn}(\mathsf D_x)\eta,\xi).
 \label{eq:Euler-transformed-quartic}
\end{equation}
See Craig--Sulem \cite[Theorem~4.1]{CraigSulem2016} and the derivation leading
to equation~(16) of Craig--Guyenne--Sulem
\cite{CraigGuyenneSulem2021}.

For the coefficients \eqref{eq:intro-CGS-source}, write the Dysthe vertex as
\begin{equation}
 T_\mu^{\rm D}(k_1,k_2;k_3,k_4)
 =1+\mu\left[
 \frac32(k_1+k_2)
 -\frac14\sum_{r=1}^2\sum_{s=3}^4|k_r-k_s|
 \right].
 \label{eq:source-Dysthe-vertex}
\end{equation}

\begin{lemma}[Uniform quartic-symbol matching]
\label{lem:Euler-quartic-match}
On every compact fixed chamber of sideband quartets,
\begin{equation}
 T^{\rm E}(1+\mu k_1,1+\mu k_2;1+\mu k_3,1+\mu k_4)
 =T_\mu^{\rm D}(k_1,k_2;k_3,k_4)+O_K(\mu^2).
 \label{eq:Euler-quartic-match}
\end{equation}
The remainder is real analytic after the chamber signs are fixed and is
uniformly bounded together with its required derivatives.
\end{lemma}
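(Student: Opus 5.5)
We will prove the lemma by computing the ordered \(2\to2\) symbol of \(H_4^{(3)}\) in \eqref{eq:Euler-transformed-quartic} explicitly. We then Taylor-expand it in \(\mu\) at sideband wavenumbers, with the chamber signs frozen. First, recall the quartic Dirichlet--Neumann term
\[
G^{(2)}(\eta)=\tfrac12\bigl(\mathsf D_x^2\eta^2|\mathsf D_x|+|\mathsf D_x|\eta^2\mathsf D_x^2-2|\mathsf D_x|\eta|\mathsf D_x|\eta|\mathsf D_x|\bigr),
\]
so that \(H_4=\tfrac12\int\xi G^{(2)}(\eta)\xi\dd x\) \cite{CraigSulem1993}. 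Form \(H_4^{(3)}\) by subtracting the same expression with \(\eta\) replaced by \(-\ii\operatorname{sgn}(\mathsf D_x)\eta\).

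Next, substitute the normal map \eqref{eq:Euler-normal-map}. Keep only monomials \(\bar b(k_1)\bar b(k_2)b(k_3)b(k_4)\) with all four \(k_j\) near \(1\), hence positive. Then each \(\widehat\eta,\widehat\xi\) factor selects a single term \(b(k)\) or \(\overline{b(k)}\), and we symmetrize over the \(2!2!\) slot permutations in the normalization \eqref{eq:H4}. The resulting ordered symbol \(T^{\rm E}(k_1,k_2;k_3,k_4)\) is a finite sum of terms. Each term is a product of factors \(s(k_j)^{\pm1}=k_j^{\mp1/4}\), of moduli \(|k_r+k_s|=k_r+k_s\) and \(|k_r-k_s|\), and of signs \(\operatorname{sgn}(k_r-k_s)\) coming from the Hilbert-transformed \(\eta\).

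Now set \(k_j=1+\mu\kappa_j\) on a compact fixed chamber of sideband quartets. There every \(\operatorname{sgn}(\kappa_r-\kappa_s)\) is constant, and \(|k_r-k_s|=\mu|\kappa_r-\kappa_s|\) becomes \(\mu\) times a fixed linear form. The symbol is therefore a polynomial in \(\mu\), the \(\kappa_j\), and the real-analytic functions \((1+\mu\kappa_j)^{\pm1/4}\). It is jointly real analytic in \((\mu,\kappa)\) on a complex neighbourhood of \([0,\mu_K]\times K\), so Taylor's theorem with Cauchy estimates gives a remainder \(O_K(\mu^2)\) that is uniformly bounded together with its derivatives. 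It remains to identify the Taylor coefficients of order \(0\) and \(1\) with \eqref{eq:source-Dysthe-vertex}.

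At order zero, all \(k_j=1\) and all difference moduli vanish. The carrier value should be \(k_0^3=1\); we check this against the Zakharov value \(T(k,k;k,k)=k^3\) in the unitary normalization \eqref{eq:Gamma4}, and against the NLS sign fixed by \(a=-1/8\), \(B=1\). This check also pins down the constant and sign conventions of the whole computation.

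At order one, the expansion splits into two pieces. The analytic part, from the \(s\)-factors and the sum moduli, should give \(\tfrac32\bigl(\sum_j\kappa_j\bigr)/2\). On the momentum shell this equals \(\tfrac32(\kappa_1+\kappa_2)\), i.e.\ the derivative of \(k^3\) averaged over the legs, which is the \(c=3\) term. The nonanalytic part, from the difference moduli, should give \(-\tfrac14\sum_{r,s}|\kappa_r-\kappa_s|\), the mean-flow term with \(D=1\). As a consistency check, these coefficients must reproduce \eqref{eq:CGS-map} at \(k_0=1\), since Craig--Guyenne--Sulem derived exactly these Dysthe coefficients from this transformed quartic Hamiltonian. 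Our task is to make their formal expansion uniform and to verify term by term that nothing else survives at order \(\mu\).

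The main obstacle is the bookkeeping of the Hilbert-transformed terms. Products such as \(\operatorname{sgn}(k_1-k_3)\,|k_1-k_3|\) and \(\operatorname{sgn}(\cdot)\operatorname{sgn}(\cdot)\) can contribute order-\(\mu^0\) or order-\(\mu\) pieces that depend on the chamber. The leading symbol must come out chamber independent, equal to \(1\), and the order-\(\mu\) term must involve only the absolute values \(|\kappa_r-\kappa_s|\) with no residual sign factors. To handle this, we will first group the terms of \(H_4-H_4(-\ii\operatorname{sgn}\mathsf D_x\,\eta,\xi)\) by the pair of \(\eta\)-legs. For pairs with equal signs, we expect the subtraction to cancel the \(\operatorname{sgn}^2=1\) contributions exactly, which would leave only the sign-dependent pieces. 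We would verify the resulting cancellation chamber by chamber, using the symmetry under \(k_1\leftrightarrow k_2\), \(k_3\leftrightarrow k_4\), to reduce the number of chambers. As a cross-check, we would confirm it against the symbolic reconstruction in Online Resource~1.
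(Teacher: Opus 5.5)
Your route is the same as the paper's: polarize \(H_4^{(3)}\) in the positive-wavenumber sector, freeze the chamber, and Taylor-expand. Your analyticity and uniformity argument is fine. The gap is that the content of the lemma, the order-\(0\) and order-\(1\) coefficients, is only predicted (``should give''), and two things you do specify are wrong.

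\emph{The sign of \(G^{(2)}\).} In deep water
\(G^{(2)}(\eta)=-\tfrac12\bigl(\mathsf D_x^2\eta^2|\mathsf D_x|+|\mathsf D_x|\eta^2\mathsf D_x^2-2|\mathsf D_x|\eta|\mathsf D_x|\eta|\mathsf D_x|\bigr)\). You can check this on \(\phi=e^{ky+\ii kx}\) with \(k>0\) and \(\eta=e^{\ii lx}\), taking \(k+2l<0<k+l\). With your sign the carrier value comes out \(-1\). This sign is not a convention, so ``pinning it down'' by matching the expected \(k_0^3=1\) assumes what is to be proved.

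\emph{The effect of the Hilbert subtraction.} The operator \(-\ii\operatorname{sgn}(\mathsf D_x)\) acts on each \(\eta\) factor separately. It therefore contributes \(-\ii\operatorname{sgn}\) of a single Fourier argument: \(-\ii\) on a \(b\)-leg and \(+\ii\) on a \(\bar b\)-leg. No factor \(\operatorname{sgn}(k_r-k_s)\) ever occurs. The subtraction multiplies each term by \(1+\operatorname{sgn}(k_c)\operatorname{sgn}(k_d)\). It \emph{cancels} terms whose two \(\eta\)-legs are of opposite type and \emph{doubles} those whose \(\eta\)-legs are both \(b\) or both \(\bar b\). This is the reverse of your expectation, and the chamber-dependent sign bookkeeping you flag as the main obstacle does not arise.

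What is missing is the short evaluation that makes the lemma immediate. Write \(\Sigma=K_1+K_2=K_3+K_4\) for physical wavenumbers, and take the surviving assignment with \(\xi\) on the barred legs \(K_1,K_2\) and \(\eta\) on \(K_3,K_4\).
\begin{itemize}
\item The two \(\mathsf D_x^2\eta^2|\mathsf D_x|\)-type terms give \(K_1K_2\Sigma\).
\item After within-pair symmetrization, the \(|\mathsf D_x|\eta|\mathsf D_x|\eta|\mathsf D_x|\) term gives \(-\tfrac12K_1K_2\sum_{r=1}^2\sum_{s=3}^4|K_r-K_s|\).
\item The other assignment is identical with \(K_3K_4\) in place of \(K_1K_2\).
\end{itemize}
The chamber identity \(\sum_{r,s}|K_r-K_s|=2\Sigma-4K_{\min}\) collapses both brackets to \(2K_{\min}\). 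The normal-map factors then give the paper's closed form \(T^{\rm E}=\tfrac12K_{\min}(K_1K_2K_3K_4)^{1/4}\bigl(\sqrt{K_1K_2}+\sqrt{K_3K_4}\bigr)\).

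From this closed form, \(T^{\rm E}=1+\mu(S+m)+O_K(\mu^2)\) with \(m=\min_jk_j\). The same identity rewrites \(S+m=\tfrac32S-\tfrac14\sum|k_r-k_s|\). Your predicted coefficients, \(\tfrac32(\kappa_1+\kappa_2)\) and \(-\tfrac14\sum|\kappa_r-\kappa_s|\) in your notation, are correct. As written, however, they are asserted or borrowed from Craig--Guyenne--Sulem rather than derived. The closed form also makes real analyticity on a fixed chamber trivial, since \(K_{\min}\) is then a fixed coordinate.
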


\begin{proof}
Polarizing the exact transformed Hamiltonian
\eqref{eq:Euler-transformed-quartic} in the convention
\eqref{eq:Euler-W6-definition} gives, for positive physical wavenumbers
\(K_1+K_2=K_3+K_4\),
\begin{equation}
 \boxed{
 T^{\rm E}(K_1,K_2;K_3,K_4)
 =\frac12 K_{\min}(K_1K_2K_3K_4)^{1/4}
 \left(\sqrt{K_1K_2}+\sqrt{K_3K_4}\right),
 }
 \label{eq:Euler-positive-quartic-kernel}
\end{equation}
where \(K_{\min}=\min_jK_j\).  Formula
\eqref{eq:Euler-positive-quartic-kernel} follows directly from the two terms
in \eqref{eq:Euler-transformed-quartic}; it is the positive-wavenumber
specialization of the cubic-normal-form quartic kernel
\cite{CraigSulem2016,CraigGuyenneSulem2021}.

Set \(K_j=1+\mu k_j\),
\(S=k_1+k_2=k_3+k_4\), and
\(m=\min_jk_j\).  On a fixed chamber the same entry realizes the minimum, and
\begin{align*}
 K_{\min}&=1+\mu m,\\
 (K_1K_2K_3K_4)^{1/4}&=1+\frac{\mu S}{2}+O_K(\mu^2),\\
 \sqrt{K_1K_2}+\sqrt{K_3K_4}&=2+\mu S+O_K(\mu^2).
\end{align*}
Consequently the exact kernel is
\[
 T^{\rm E}=1+\mu(S+m)+O_K(\mu^2).
\]
Equal pair sums imply the elementary chamber identity
\[
 \sum_{r=1}^2\sum_{s=3}^4|k_r-k_s|=2S-4m.
\]
Thus
\[
 S+m=\frac32S-\frac14
 \sum_{r=1}^2\sum_{s=3}^4|k_r-k_s|,
\]
which is exactly the first Taylor coefficient in
\eqref{eq:source-Dysthe-vertex}.  Positivity of the physical wavenumbers and
the fixed minimum on a neighborhood of the compact chamber make
\eqref{eq:Euler-positive-quartic-kernel} real analytic there.  Taylor's
theorem supplies the uniform \(O_K(\mu^2)\) remainder, together with the
required derivative bounds.
\end{proof}

\subsection{The singular contraction and the bounded Euler remainder}

For a channel \((\ell,\alpha)\), use the internal sideband
\(r_{\ell\alpha}\) from \eqref{eq:internal}.  Along \(\Psi_\mu\), define the
rescaled exact Euler divisor
\begin{equation}
 \widehat\Delta^{\rm E}_{\ell\alpha}
 =\mu^{-2}\left[
 \sqrt{1+\mu q_\alpha}+\sqrt{1+\mu r_{\ell\alpha}}
 -\sqrt{1+\mu p_i}-\sqrt{1+\mu p_j}
 \right].
 \label{eq:Euler-rescaled-divisor}
\end{equation}
Let \(\Delta^{\rm D}_{\ell\alpha}\) be the corresponding divisor for the
temporal Dysthe dispersion along \(\Phi_\mu\).  Lemma~\ref{lem:Euler-embedding}
and \eqref{eq:Euler-dispersion-expansion} give
\begin{equation}
 \widehat\Delta^{\rm E}_{\ell\alpha}
 =\Delta^{\rm D}_{\ell\alpha}+O_K(\mu^2).
 \label{eq:Euler-divisor-match}
\end{equation}
The Euler vertices below are evaluated at physical wavenumbers
\(1+\mu(q,r,p_i,p_j)\), whereas the Dysthe vertices use sideband coordinates
before conversion to physical wavenumbers.

\begin{lemma}[Bounded coefficients after cubic elimination]
\label{lem:Euler-symbol-class}
Through homogeneous degree six, every coefficient produced from
\(H_3,\ldots,H_6\) by cubic elimination in the physical variables
\((\eta,\xi)\) is a finite sum of products of Fourier arguments, their
absolute values with nonnegative integer powers, and bounded sign factors.
No negative power of an internal Fourier argument is introduced.  Conversion
to the normal variable \(b\) contributes the factors
\(|k_j|^{\pm1/4}\) only on the external legs.
\end{lemma}

\begin{proof}
In physical variables the cubic generator is
\begin{equation}
 K_3(\eta,\xi)=\frac12\int
 \bigl(-\ii\operatorname{sgn}(\mathsf D_x)\eta\bigr)^2
 |\mathsf D_x|\xi\dd x.
 \label{eq:Euler-cubic-generator}
\end{equation}
The Craig--Sulem recursion for \(H_3,\ldots,H_6\)
\cite{CraigSulem1993} and the generator
\eqref{eq:Euler-cubic-generator} use only multiplication,
\(\mathsf D_x\), \(|\mathsf D_x|\),
\(\operatorname{sgn}(\mathsf D_x)\), and finite composition
\cite{CraigGuyenneSulem2021}.  Variational differentiation and Poisson
contraction in \((\eta,\xi)\) preserve this finite class.  In particular, an
internal \(\eta\)--\(\xi\) contraction is taken before the normal-variable map
\eqref{eq:Euler-normal-map}; it therefore does not acquire a
\(|k|^{-1/4}\) factor.  Only after the physical-variable coefficient has been
formed are its six external fields converted to \(b\), giving the stated
external factors.  This proves the claim by induction over the finite
Craig--Sulem recursion and the finite number of Poisson brackets.
\end{proof}

\begin{lemma}[Only the quartic contraction is singular]
\label{lem:Euler-bounded-remainder}
On the embedded patch,
\begin{equation}
 W_6^{\rm E}
 =-\frac4{\mu^2}\sum_{\ell,\alpha=1}^3
 \frac{T^{\rm E}_{1,\ell\alpha}T^{\rm E}_{2,\ell\alpha}}
 {\widehat\Delta^{\rm E}_{\ell\alpha}}
 +R_6^{\rm E},
 \qquad
 \sup_{z\in K,\,0<\mu<\mu_K}|R_6^{\rm E}|<\infty.
 \label{eq:Euler-singular-decomposition}
\end{equation}
\end{lemma}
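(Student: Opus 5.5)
We propose to organize the complete connected Euler coefficient by the type of normal-form diagram that produces it, and to show that only one type carries the inverse power \(\mu^{-2}\). After cubic elimination, the degree-six part of the transformed Euler Hamiltonian is a direct sextic vertex \(H_6^{(3)}\). It collects the original \(H_6\), the cubic--cubic--quartic and cubic--quintic brackets, and the iterated cubic brackets. The quartic Lie step then adds \(\{Z_4,F_4\}+\frac12\{P_4,F_4\}\), exactly as in \eqref{eq:normalform}. For the ordered disjoint target, the argument of Section~\ref{sec:normal-form} applies verbatim: action-only resonant quartic terms cannot create the target. On a generic patch no virtual quartet is resonant, because the nine rescaled denominators are bounded away from zero by Lemma~\ref{lem:Euler-embedding}. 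Hence
\[
W_6^{\rm E}=W_{6,\rm dir}^{\rm E}-4\sum_{\ell,\alpha}\frac{T^{\rm E}_{1,\ell\alpha}T^{\rm E}_{2,\ell\alpha}}{\Delta^{\rm E}_{\ell\alpha}} + R_{\rm np},
\]
where \(\Delta^{\rm E}_{\ell\alpha}=\mu^2\widehat\Delta^{\rm E}_{\ell\alpha}\) is the unrescaled Euler divisor. The term \(R_{\rm np}\) collects every contraction of quartic vertices whose internal line is not a positive-carrier sideband, that is, the nonmatching sectors.

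The first step is the direct vertex. By Lemma~\ref{lem:Euler-symbol-class}, \(W_{6,\rm dir}^{\rm E}\) is a finite sum of products of Fourier arguments, absolute values with nonnegative powers, bounded signs, and the external factors \(|k_j|^{\pm1/4}\). Cubic elimination itself divides by three-wave mismatches. In deep water these never vanish for internal momenta built from positive external modes near \(1\); they are bounded below by a constant independent of \(\mu\). This is the step where I would check the Craig--Sulem explicit generator \eqref{eq:Euler-cubic-generator}: it is a genuine differential operator with no divisors, so the bound is automatic. With physical wavenumbers in the compact set \(1+\mu\Psi_\mu(K)\), all arguments stay in a compact set. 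The external factors are bounded because each \(K_j\) stays near \(1\). Internal arguments such as \(p_i-q_\alpha\) may be small, but they appear only with nonnegative powers. So \(W_{6,\rm dir}^{\rm E}\) is uniformly bounded.

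The second step treats the nonmatching quartic sectors in \(R_{\rm np}\). Their internal mode carries momentum near \(0\) or near \(\pm 2\), or negative frequency, as in the \(\bar b\bar b bb\) versus \(b b b\bar b\) type terms of the transformed quartic Hamiltonian. The corresponding mismatch is then a sum of frequencies near \(\pm1,\pm2\), or involves \(\sqrt{|r|}\) with \(r=O(\mu)\). This is the main obstacle. For an internal line near zero momentum the divisor is \(O(\mu^{1/2})\)-small, not bounded below, so a naive bound fails. The plan is to show that the corresponding quartic vertices vanish at least like \(|r|^{1/2}\) times \(O(1)\): the Craig--Sulem transformed kernel carries a factor \(K_{\min}\) or \(|k|\) on a soft leg, as in \eqref{eq:Euler-positive-quartic-kernel}. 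The two vertex factors then supply at least \(|r|\), which beats the \(\sqrt{|r|}\) divisor. Sectors with internal momentum near \(\pm2\), or with mixed signs, have divisors bounded below by an absolute constant, and their vertices are bounded by Lemma~\ref{lem:Euler-symbol-class}.

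The final step is to rescale the matching sector. Writing the matching sector with \(\widehat\Delta^{\rm E}_{\ell\alpha}\) produces the explicit \(-4\mu^{-2}\sum T_1T_2/\widehat\Delta\) term, with \(R_6^{\rm E}=W_{6,\rm dir}^{\rm E}+R_{\rm np}\) uniformly bounded. Uniformity in \(z\in K\) follows from compactness and the denominator bounds of Lemma~\ref{lem:Euler-embedding}.
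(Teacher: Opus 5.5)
Your architecture matches the paper's. The direct sextic vertex and all \(\ad_{K_3}\) terms are bounded by Lemma~\ref{lem:Euler-symbol-class}. The number-preserving \(2\to2\) sector yields the nine channels with physical divisor \(\mu^2\widehat\Delta^{\rm E}_{\ell\alpha}\) and coefficient \(-4\). Every other sector must then be shown \(O_K(1)\).

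\textbf{The gap is in this last step for the nonmatching quartic sectors \(R_{\rm np}\).} You assert that their internal line can carry momentum near \(0\) or \(\pm2\), treat the resulting \(O(\mu^{1/2})\) divisor as the main obstacle, and propose to beat it by an unproved vanishing of the vertices like \(|r|^{1/2}\). You justify that vanishing by the \(K_{\min}\) factor in \eqref{eq:Euler-positive-quartic-kernel}, but that formula is established only for the positive-wavenumber number-preserving kernel. In fact this case does not occur, and the cases that do occur are not handled.

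The missing idea is a carrier parity count. In a one-contraction diagram of two quartic vertices producing the six-leg target, each vertex carries exactly three external legs with carrier \(\pm1\), so the signed carrier sum is odd: \(S\in\{-3,-1,1,3\}\).
\begin{itemize}
\item With internal orientation \(\sigma_0=\pm1\), leading momentum conservation gives internal wavenumber \(r_0=-S/\sigma_0\). This is near \(\pm1\) or \(\pm3\), never near \(0\) or \(\pm2\).
\item The leading mismatch is \(\Delta_0=S+\sigma_0\sqrt{|S|}\). It vanishes only for \(|S|=1\), \(\sigma_0=-\operatorname{sgn}S\), which is exactly the matching sector.
\item Otherwise \(|\Delta_0|\in\{2,3-\sqrt3,3+\sqrt3\}\). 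Compactness then gives a uniform lower bound, and together with the bounded vertices of Lemma~\ref{lem:Euler-symbol-class} this yields an \(O_K(1)\) contribution.
\end{itemize}
Your case list never identifies the \(|S|=3\) sector (internal momentum near \(\pm3\)). It also asserts, rather than computes, the bound for the \(|S|=1\) sector with non-number-preserving orientation. So \(R_{\rm np}=O_K(1)\) is not proved as written.

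The momenta near \(0\) and \(\pm2\) you describe are those of an internal line attached to a \emph{cubic} vertex with two external legs. An example is \(p_i-q_\alpha\) in an \(\ad_{K_3}\) contraction, where the three-wave mismatch is \(O(\mu^{1/2})\). Your remark that cubic mismatches are bounded below independently of \(\mu\) is therefore false. It is harmless only because you then rely, correctly, on the divisor-free physical-variable generator \eqref{eq:Euler-cubic-generator} and Lemma~\ref{lem:Euler-symbol-class}.

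The same carrier count is also what disposes of \(\{Z_4,F_4\}\) outside the number-preserving sector. Lemma~\ref{lem:Euler-embedding} controls only the nine number-preserving divisors. You need \(|\Delta_0|\geq3-\sqrt3\) to know that no non-number-preserving quartet touching the target is resonant.
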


\begin{proof}
The quartic monomials with two unbarred and two barred fields preserve wave
action.  Their homological equation gives exactly the nine one-contraction
channels already counted in Section~\ref{sec:normal-form}.
The physical divisor is \(\mu^2\widehat\Delta^{\rm E}\), and the same ordered
factorial calculation gives the coefficient \(-4\).

It remains to exclude any additional singular sector.  Assign carrier sign
\(+1\) to an unbarred external leg and \(-1\) to a barred leg.  For a
non-number-preserving quartic monomial, the signed sum of its three external
carriers is \(S\in\{-3,-1,1,3\}\).  If \(\sigma_0=\pm1\) is the orientation
of the internal leg, leading momentum and frequency mismatch give
\begin{equation}
 r_0=-S/\sigma_0,
 \qquad
 \Delta_0=S+\sigma_0\sqrt{|S|}.
 \label{eq:Euler-carrier-count}
\end{equation}
The only zero occurs for \(|S|=1\) and
\(\sigma_0=-\operatorname{sgn}S\), which is precisely the
number-preserving \(2\to2\) sector already displayed.  Every other value has
absolute size in \(\{2,3-\sqrt3,3+\sqrt3\}\), so compactness preserves a
uniform positive bound.

With the bracket convention \eqref{eq:bracket}, this generator satisfies
\(\{H_2,K_3\}=H_3\).  Cubic elimination therefore uses
\(e^{-\ad_{K_3}}\), corresponding to the source flow at parameter \(-1\).
Its degree-six part is the finite polynomial
\begin{equation}
 H_6^{(3)}=
 \sum_{j=0}^4\frac{(-1)^j}{j!}
 \operatorname{ad}_{K_3}^{j}H_{6-j}.
 \label{eq:Euler-degree-six-Lie}
\end{equation}
Lemma~\ref{lem:Euler-symbol-class} makes every term in
\eqref{eq:Euler-degree-six-Lie} uniformly bounded, including when an internal
wavenumber tends to zero.  Each internal argument is \(m+\mu\ell(z)\), with
\(m\) an integer multiple of the carrier, while the six external factors
\((1+\mu z_j)^{\pm1/4}\) remain bounded on \(K\).

Table~\ref{tab:Euler-sectors} lists every connected degree-six sector after
cubic and quartic normalization.
\begin{table}[htbp]
\centering
\small
\begin{tabularx}{\textwidth}{@{}p{0.25\textwidth}p{0.22\textwidth}Y@{}}
\toprule
Sector & Size on the patch & Reason\\
\midrule
Direct \(H_6\) & \(O_K(1)\) & Lemma~\ref{lem:Euler-symbol-class}\\
\(\ad_{K_3}H_5\) & \(O_K(1)\) & Finite physical-variable contraction with
no inverse internal multiplier\\
\(\ad_{K_3}^2H_4\) & \(O_K(1)\) & Same symbol-class closure\\
\(\ad_{K_3}^3H_3\) and \(\ad_{K_3}^4H_2\) & \(O_K(1)\) & Same closure;
their coefficients combine as prescribed by
\eqref{eq:Euler-degree-six-Lie}\\
\(\{Z_4,F_4\}\) & \(0\) or \(O_K(1)\) & Positive resonant quartets are
action-only; every other relevant
quartic mismatch is uniformly nonzero\\
\(\tfrac12\{P_4,F_4\}\) & Nine channels of order \(\mu^{-2}\);
remainder \(O_K(1)\) & Only the number-preserving \(2\to2\) mismatch vanishes
at carrier order\\
\bottomrule
\end{tabularx}
\caption{Complete connected degree-six Euler sectors and their narrowband
sizes.  Lie coefficients and signs are those of
\eqref{eq:Euler-degree-six-Lie} and the quartic Lie transform.}
\label{tab:Euler-sectors}
\end{table}

A subsequent quintic normalization cannot change this degree-six
coefficient.  For homogeneous functionals,
\(\deg\{A_m,B_n\}=m+n-2\).  The bracket \(\{H_2,F_5\}\) removes the degree-five
term; the only possible degree-six bracket would involve the already
eliminated \(H_3\), while a bracket with \(H_4\) has degree seven.

It remains to justify the action-only entry in the table.  For fixed positive
total wavenumber
\(S\), the function \(f_S(x)=\sqrt{x}+\sqrt{S-x}\) is symmetric about
\(S/2\) and strictly monotone on either side.  Hence
\(f_S(k_1)=f_S(k_3)\) implies
\(\{k_1,S-k_1\}=\{k_3,S-k_3\}\): every exact resonant quartet in the
positive-wavenumber sector is a permutation quartet.  The resonant quartic
contribution from
this positive-wavenumber sector is therefore action-only, and its bracket
with \(F_4\) cannot create a disjoint six-leg target.  Every other quartic
sector that can contract to this target has the uniform nonzero mismatch
established above, so no additional resonant-quartic contribution occurs.
This proves
\eqref{eq:Euler-singular-decomposition}.
\end{proof}

\subsection{Proof of the Euler--Dysthe matching theorem}

\begin{proof}[Proof of Theorem~\ref{thm:euler-dysthe-matching}]
Lemma~\ref{lem:Euler-quartic-match},
\eqref{eq:Euler-divisor-match}, and the common denominator lower bound imply,
channel by channel,
\[
 \frac{T^{\rm E}_{1,\ell\alpha}T^{\rm E}_{2,\ell\alpha}}
 {\widehat\Delta^{\rm E}_{\ell\alpha}}
 =
 \frac{T^{\rm D}_{1,\ell\alpha}T^{\rm D}_{2,\ell\alpha}}
 {\Delta^{\rm D}_{\ell\alpha}}+O_K(\mu^2).
\]
There are nine channels.  Substitution into
\eqref{eq:Euler-singular-decomposition} gives
\[
 W_6^{\rm E}\bigl(1+\mu\Psi_\mu(z)\bigr)
 =\mu^{-2}\widehat W_6^{\rm D}
 \bigl(\Phi_\mu(z)\bigr)+O_K(1),
\]
which is \eqref{eq:Euler-Dysthe-strong-match}.

The general Dysthe resonance-patch theorem has already evaluated the source
coefficient:
\[
 \widehat W_6^{\rm D}\bigl(\Phi_\mu(z)\bigr)
 =\mu\frac{4BD}{aL(z)}+O_K(\mu^2).
\]
The independence of this first variation from the chosen analytic resonance
identification was established in
Remark~\ref{rem:intrinsic-first-variation}.  At
\((a,B,D)=(-1/8,1,1)\), multiplication by \(1/(4\mu)\) gives
\(-8/L(z)+O_K(\mu)\).  Combining this with the strong matching proves
\eqref{eq:Euler-Dysthe-normalized-match}.
\end{proof}

As an exact normalization check, the selected family
\(p=(-5,-2,7)\), \(q=(-7,2,5)\) has \(L=14\), so the theorem gives
\[
 \frac\mu4W_6^{\rm E}
 =-\frac47+O(\mu),\qquad
 \frac1{4\mu}\widehat W_6^{\rm D}
 =-\frac47+O(\mu).
\]
The numerical consistency checks in Online Resource~1 reproduce this common
leading coefficient on the displayed family and on a second asymmetric family.
The uniform asymptotic statement follows from the analytic proof above.

\section{Singular limit of the cancellation loci}
\label{sec:bifurcation}

Assume \(h\ne0\).  Fix \(\eps\ne0\) and a spectral bound \(K>0\).  For
\(\lambda\ne0\), the genuine four-wave plane is
\begin{equation}
 S=-\frac{2a}{3\lambda\eps h}.
 \label{eq:lambda-plane}
\end{equation}
Every quartet in \([-K,K]^4\) has \(\abs S\le2K\).  Hence, if
\begin{equation}
 \abs\lambda<\frac{\abs a}{3K\abs{\eps h}},
 \label{eq:escape-bound}
\end{equation}
the genuine plane does not meet that compact spectral box.  On every fixed
bounded wavenumber set, the temporal quartic resonance variety therefore
reduces to the spatial action-only component for sufficiently small
\(\abs\lambda\).  Globally, the plane exists for every \(\lambda\ne0\).  This
is the precise meaning of ``the resonant plane escapes to infinity.''

In coefficient space \((B,C,D,E)\), the spatial cancellation set is
\begin{equation}
 \mathcal C_{\mathrm{sp}}
 =\{D=E=0\}
 \cup
 \left\{B=C=0,\ E=-\frac{D^2}{4a}\right\}.
 \label{eq:Csp}
\end{equation}
For \(\lambda\ne0\), simultaneous temporal four- and six-wave cancellation is
\begin{equation}
 \mathcal C_{\mathrm{temp}}(\lambda)
 =\left\{D=E=0,\ C=\frac{3\lambda h}{a}B\right\}.
 \label{eq:Ctemp}
\end{equation}
For every \(R>0\), the intersections
\(\mathcal C_{\mathrm{temp}}(\lambda)\cap\overline B_R(0)\)
converge as \(\lambda\to0\) to
\(\mathcal C_{\mathrm{lim}}\cap\overline B_R(0)\), in the Hausdorff distance
on compact sets, where
\begin{equation}
 \mathcal C_{\mathrm{lim}}=\{C=D=E=0\},
 \label{eq:Clim}
\end{equation}
a proper subset of the spatial local component.  The spatial nonlocal
Calogero--Moser component with \(D\ne0\) has no temporal continuation because
the temporal four-wave condition already forces \(D=0\).

Thus the additional resonance plane leaves every bounded spectral window,
while the limiting temporal cancellation set is a proper subset of the
spatial one.

\section{Computer-assisted verification}
\label{sec:verification}

The temporal necessity calculation uses the exact determinant and ideal
certificate in Appendix~\ref{app:temporal-certificate}.  The spatial chamber
identities and the diagonal Laurent expansion were checked in exact rational
arithmetic with SymPy \cite{MeurerEtAl2017}.  Online Resource~1 contains the
scripts, exact outputs, requirements, and reproduction instructions.  Its
Euler calculations are numerical consistency checks; the uniform Euler
matching and bounded-remainder estimates are proved in Section~\ref{sec:euler-matching}.

\section{Discussion and conclusion}
\label{sec:discussion}

The leading Dysthe coefficient \(4BD/(aL)\) couples the local cubic
interaction to the induced mean flow.  Its persistence on compact resonance
patches shows that the obstruction is not tied to a selected sextet.
At the Craig--Guyenne--Sulem coefficients, the complete Euler coefficient
has the same normalized limit \(-8/L\).  The nine near-resonant quartic
contractions produce the singular term; every other connected degree-six
contribution remains bounded.

The Fedele--Dutykh spatial models and the Craig--Guyenne--Sulem temporal
model lie outside their cancellation sets.  Under the assumptions of
Section~\ref{sec:ZS}, every \(C^3\) quadratic leading symbol is therefore in
\(\operatorname{span}\{1,k,\omega(k)\}\).  These models cannot support an
inverse-scattering hierarchy with infinitely many linearly independent
regular quadratic leading symbols.

On the cancellation sets, the coefficient relations contain the known mixed
Chen--Lee--Liu, Calogero--Moser derivative nonlinear Schr\"odinger, and
Hirota representatives described in
Corollary~\ref{cor:integrable-branches}.  Their integrability is supplied by
the cited inverse-scattering and Lax-pair results.

The limit of small cubic dispersion depends on the spectral scale.  The
six-wave coefficient varies uniformly on fixed compact resonance patches,
whereas the additional four-wave plane moves to unbounded wavenumber.
Accordingly, the limiting temporal cancellation set is smaller than the
spatial cancellation set computed directly at zero cubic dispersion.

\clearpage
\appendix
\section{Derivation of the diagonal collision expansion}
\label{app:diagonal-expansion}

This appendix derives \eqref{eq:diagonal-W} directly from the nine channel
formula \eqref{eq:temp-kernel}.  Put
\[
 P=(-5,-2,7),\qquad Q=(-7,x,7-x),
\]
so that \(p_\ell=\kappa+\rho P_\ell\) and
\(q_\alpha=\kappa+\rho Q_\alpha\), with \(\rho>0\).  Retain
\[
 A=A_\kappa=a+3\gamma\kappa,\qquad
 U=U_\kappa=B+c\kappa.
\]
The exact resonance equation is
\[
 -2A(x-2)(x-5)+21\gamma\rho(10+7x-x^2)=0.
\]
Substitution of
\(x=2+x_1\rho+x_2\rho^2+O(\rho^3)\) and comparison of the first two
orders gives
\begin{equation}
 x_1=-\frac{70\gamma}{A},\qquad
 x_2=\frac{7105\gamma^2}{3A^2}.
 \label{eq:diagonal-x-second-order}
\end{equation}

At \(\rho=0\), the ordered offsets are
\[
 P=(-5,-2,7),\qquad Q^0=(-7,2,5),
\]
and the signs of \(Q_\alpha-P_m\) are
\begin{equation}
 \sigma_{\alpha m}
 =\operatorname{sgn}(Q_\alpha^0-P_m)
 =
 \begin{pmatrix}
 -1&-1&-1\\
  1& 1&-1\\
  1& 1&-1
 \end{pmatrix}.
 \label{eq:diagonal-sign-matrix}
\end{equation}
These signs remain fixed for sufficiently small positive \(\rho\).  If
\(\{i,j\}=\{1,2,3\}\setminus\{\ell\}\), the resolved vertices and divisor are
\begin{align}
 V^{(1)}_{\alpha\ell}
 &=
 U-\frac{\rho}{2}\left[
 cP_\ell+d\sum_{m\ne\ell}
 \sigma_{\alpha m}(Q_\alpha-P_m)\right],
 \label{eq:diagonal-V1-expanded}\\
 V^{(2)}_{\alpha\ell}
 &=
 U-\frac{\rho}{2}\left[
 cQ_\alpha+d\sum_{n\ne\alpha}
 \sigma_{n\ell}(Q_n-P_\ell)\right],
 \label{eq:diagonal-V2-expanded}\\
 \Delta_{\ell\alpha}
 &=
 \rho^2(Q_\alpha-P_i)(Q_\alpha-P_j)
 (2A-3\gamma\rho P_\ell).
 \label{eq:diagonal-divisor-expanded}
\end{align}
Thus every term required for the Laurent expansion is explicit and rational
in \(A,\gamma,U,c,d\).

Let
\[
 \mathcal A_{\rm temp}
 =\sum_{\ell,\alpha}
 \frac{V^{(1)}_{\alpha\ell}V^{(2)}_{\alpha\ell}}
 {\Delta_{\ell\alpha}}.
\]
Substitution of \eqref{eq:diagonal-x-second-order}--%
\eqref{eq:diagonal-divisor-expanded} and summation over all nine channels
gives the coefficients in Table~\ref{tab:diagonal-Laurent}.  The table also
exhibits the exact cancellation of the apparent \(\rho^{-2}\) term.
\begin{table}[htbp]
\centering
\begin{tabular}{@{}cc@{}}
\toprule
Power & Coefficient in \(\mathcal A_{\rm temp}\)\\
\midrule
\(\rho^{-2}\) & \(0\)\\[0.2em]
\(\rho^{-1}\) & \(\displaystyle-\frac{Ud}{14A}\)\\[0.7em]
\(\rho^0\) &
\(\displaystyle
 \frac{27\gamma^2U^2}{8A^3}
 -\frac{9\gamma Uc}{8A^2}
 +\frac{3d^2}{4A}\)\\
\bottomrule
\end{tabular}
\caption{Exact Laurent coefficients of the summed nine-channel contraction
on the diagonal collision family.}
\label{tab:diagonal-Laurent}
\end{table}

Finally, \(W_6=-4\mathcal A_{\rm temp}-12e\), and
\[
 cA-3\gamma U
 =c(a+3\gamma\kappa)-3\gamma(B+c\kappa)
 =ac-3\gamma B.
\]
Therefore
\[
 W_6(p_\rho;q_\rho)
 =
 \frac{2dU_\kappa}{7A_\kappa\rho}
 +\frac{9\gamma U_\kappa(ac-3\gamma B)}{2A_\kappa^3}
 -\frac{3d^2}{A_\kappa}-12e+O(\rho),
\]
which is \eqref{eq:diagonal-W}.  In the verification directory, the exact
script \path{verify_diagonal_channel_expansion.py} reconstructs
\eqref{eq:diagonal-V1-expanded}--\eqref{eq:diagonal-divisor-expanded} and
checks each coefficient in Table~\ref{tab:diagonal-Laurent} over the
characteristic-zero rational-function field.

\section{Exact temporal necessity certificate}
\label{app:temporal-certificate}

This appendix makes the finite computer-assisted part of
Theorem~\ref{thm:temporal} directly checkable.  After shifting the dispersion
to pure cubic form, write
\(B_0=B-ac/(3\gamma)\), \(C_0=c\), \(D_0=d\), and \(E_0=\gamma e\).
The shifted dispersion is \(\gamma y^3\) plus affine terms.  Thus the
normalized equation is
\[
 \gamma W_6=-4\sum_{\ell,\alpha}
 \frac{T^{(1)}_{\ell\alpha}T^{(2)}_{\ell\alpha}}
 {\Delta_{\ell\alpha}/\gamma}-12E_0=0.
\]
The six exact resonances are
\begin{equation}
\begin{gathered}
(-6,-3,5)\to(-5,-1,2),\qquad
(-6,-2,5)\to(-5,-1,3),\\
(-6,1,5)\to(-5,2,3),\qquad
(-5,-1,6)\to(-3,-2,5),\\
(-5,2,6)\to(-3,1,5),\qquad
(-5,3,6)\to(-2,1,5).
\end{gathered}
\label{eq:temporal-witnesses}
\end{equation}
Four witnesses have repeated slots within a virtual quartet, although
their six external modes are distinct and every virtual divisor is nonzero.
We evaluate the continuous extension of the generic kernel at those
points.  To justify this, fix the incoming triple and vary the outgoing
triple on its pure-cubic resonance curve.  A tangent is
\[
 (q_2^2-q_3^2,\ q_3^2-q_1^2,\ q_1^2-q_2^2).
\]
For each listed witness its components are nonzero.  The repeated-slot
equalities \(r_{\ell\alpha}=q_\alpha\) or
\(r_{\ell\alpha}=p_\ell\) therefore break under a small variation, while
external separation and the nonzero divisors persist.  The witnesses
are thus limits of generic resonances.  Identical vanishing on the
generic set implies vanishing at these limits.
In channel order
\((\ell,\alpha)=(1,1),(1,2),\ldots,(3,3)\), the corresponding nine
normalized denominators \(\Delta_{\ell\alpha}/\gamma\) are the rows
\begin{equation}
\left(\begin{smallmatrix}
120&-72&-90&30&90&72&54&-270&-1080\\
270&-54&-90&30&90&54&72&-120&-1080\\
1080&-54&-72&30&72&54&90&-120&-270\\
270&120&-90&-54&-72&-30&72&54&-1080\\
1080&120&-72&-54&-90&-30&90&54&-270\\
1080&270&-54&-72&-90&-30&90&72&-120
\end{smallmatrix}\right).
\label{eq:temporal-denominator-table}
\end{equation}
Every entry is nonzero, and the only dispersion factor divided out is the
assumed nonzero \(\gamma\).  Clearing the displayed integer denominators
gives the following primitive polynomials, with
\(\gamma W_6=f_j/270\) at witness \(j\):
\begin{align}
f_1={}&-60B_0^2+80B_0C_0+511B_0D_0-327C_0D_0
       -1018D_0^2-3240E_0,\notag\\
f_2={}&-45B_0^2+45B_0C_0+370B_0D_0-150C_0D_0
       -686D_0^2-3240E_0,\notag\\
f_3={}&-36B_0^2+283B_0D_0+87C_0D_0-464D_0^2-3240E_0,\notag\\
f_4={}&36B_0^2-283B_0D_0+87C_0D_0+464D_0^2-3240E_0,\notag\\
f_5={}&45B_0^2+45B_0C_0-370B_0D_0-150C_0D_0
       +686D_0^2-3240E_0,\notag\\
f_6={}&60B_0^2+80B_0C_0-511B_0D_0-327C_0D_0
       +1018D_0^2-3240E_0.
\label{eq:temporal-cleared-polynomials}
\end{align}
The simplest certificate is linear algebra.  In the ordered monomial list
\[
(B_0^2,B_0C_0,B_0D_0,C_0D_0,D_0^2,E_0),
\]
the coefficient matrix of \((f_1,\ldots,f_6)\) has determinant
\(59584377600\ne0\).  Its rational inverse expresses each monomial as a
linear combination of the six witness polynomials.  Consequently
\begin{equation}
\boxed{
 I=\langle f_1,\ldots,f_6\rangle
 =\langle B_0^2,B_0C_0,B_0D_0,C_0D_0,D_0^2,E_0\rangle.
}
\label{eq:temporal-ideal-certificate}
\end{equation}
With variable order \((E_0,B_0,C_0,D_0)\) and graded reverse
lexicographic order, the displayed monomial generators are also the exact
reduced Gr\"obner basis:
\begin{equation}
\{B_0^2,B_0C_0,B_0D_0,C_0D_0,D_0^2,E_0\}.
\label{eq:temporal-groebner-certificate}
\end{equation}
The zero locus of this ideal, over either \(\mathbb R\) or \(\mathbb C\), is
\[
 V(I)=\{B_0=D_0=E_0=0\},\qquad C_0\ \text{arbitrary}.
\]
Indeed, the square generators force \(B_0=D_0=0\), and the separate
generator \(E_0\) forces \(E_0=0\).  Conversely, every listed generator
vanishes on this set.  Equivalently, \(\sqrt I=\langle B_0,D_0,E_0\rangle\).
The scripts in Online Resource~1 reconstruct the witnesses, denominators,
vertices, cleared polynomials, and linear-algebra certificate in exact
arithmetic.

Online Resource~1 also contains the Euler coefficient calculation and its
numerical consistency checks.  The quartic-kernel and
bounded-remainder arguments are given in
Lemmas~\ref{lem:Euler-quartic-match} and \ref{lem:Euler-bounded-remainder}.

\section*{Acknowledgements}

The authors thank Thomas A. Schmidt for reviewing the Gr\"obner-basis
calculation, and Thomas A. Schmidt and Patrik V. Nabelek for discussions
and presentations in our joint nonlinear waves seminar.
Alex J. Sutherland gratefully acknowledges Oregon State University for
support through a graduate teaching assistantship, which made this work
possible.

\section*{Statements and Declarations}

\noindent\textbf{Funding.}
Alex J. Sutherland was supported by a graduate teaching assistantship at
Oregon State University.

\medskip
\noindent\textbf{Competing interests.}
The authors declare no competing interests.

\medskip
\noindent\textbf{Use of AI-assisted technologies.}
During preparation of this work, the authors used OpenAI Codex (accessed
August--September 2026) to assist in identifying candidate rational resonance
witnesses and in developing and checking portions of the symbolic verification
code, including the Gr\"obner-basis and Euler-coefficient calculations.  The
exact symbolic certificates and separate numerical regressions are provided
with the article.  The authors take full responsibility for the results and
their interpretation.

\medskip
\noindent\textbf{Data and code availability.}
All calculations supporting the results are contained in the article and the
\emph{Supplementary Verification Archive} (Online Resource~1), which provides
the exact symbolic scripts, outputs, requirements, reproduction instructions,
and a SHA-256 manifest.  The same archive is included with the arXiv source.

\medskip
\noindent\textbf{Online Resource 1.}
Symbolic verification scripts and exact certificates for the Dysthe
classification and resonance-deformation calculations, with the Euler
coefficient calculation and separate numerical consistency checks.

\begingroup
\small
\bibliographystyle{plainnat}
\bibliography{references}
\endgroup

\end{document}